\title{ Deployment Optimization of Dual-functional UAVs for Integrated Localization and Communication}
\author{Zheyuan Yang,~\IEEEmembership{Graduate Student Member,~IEEE}, Suzhi Bi,~\IEEEmembership{Senior Member,~IEEE}, Ying-Jun Angela Zhang,~\IEEEmembership{Fellow,~IEEE} \thanks{ Z.~Yang and Y-J.~A.~Zhang are with the Department of Information Engineering, The Chinese University of Hong Kong, Hong Kong (\{yz019, yjzhang\}@ie.cuhk.edu.hk). S. Bi is with the College of Electronics and Information Engineering, Shenzhen University, Shenzhen, China (bsz@szu.edu.cn). S. Bi is also with the Peng Cheng Laboratory, Shenzhen, China. } }
\theoremstyle{definition}
\newtheorem{lemma}{Lemma}
\newtheorem{proposition}{Proposition}
\newtheorem{corollary}{Corollary}
\begin{document}
\maketitle
\begin{abstract}
 In emergency scenarios, unmanned aerial vehicles (UAVs) can be deployed to assist localization and communication services for ground terminals. In this paper, we propose a new integrated air-ground networking paradigm that uses dual-functional UAVs to assist the ground networks for improving both communication and localization performance. We investigate the optimization problem of deploying the minimal number of UAVs to satisfy the communication and localization requirements of ground users. The problem has several technical difficulties including the cardinality minimization, the non-convexity of localization performance metric regarding UAV location, and the association between user and communication terminal. To tackle the difficulties, we adopt $D$-optimality as the localization performance metric, and derive the geometric characteristics of the feasible UAV hovering regions in 2D and 3D based on accurate approximation values. We solve the simplified 2D projection deployment problem  by transforming the problem into a minimum hitting set problem, and propose a low-complexity algorithm to solve it. Through numerical simulations,  we compare our proposed algorithm with benchmark methods. The number of UAVs required by the proposed algorithm is close to the optimal solution, while other benchmark methods require much more UAVs to accomplish the same task. 
 \end{abstract}

 \begin{IEEEkeywords}
 Wireless localization, integrated localization and communication (ILAC), unmanned aerial vehicle (UAV), geometric deployment.
\end{IEEEkeywords}

\allowdisplaybreaks
\section{Introduction}
 
  \subsection{Motivations}
Modern 5G communication networks with high speed and low latency are expected to cover most urban areas. New applications enabled by 5G such as remote healthcare, Internet of Vehicles, smart homes, industrial control, and environmental monitoring will bring unprecedented intelligent  service experience to citizens \cite{hu2021digital,liu2021promoting,kumhar2021emerging}. However, urban 5G networks are under the risk of regional service interruption in case of  earthquakes, floods or other emergency accidents\cite{hasan2021search}. Besides, in  remote areas such as mountainous environments and wild fields, it is difficult to guarantee smooth communication when performing terrain survey, reconnaissance and rescue operations due to the high deployment cost and insufficient coverage of communications infrastructure.  Recently, unmanned aerial vehicle (UAV) technology has emerged as an important solution to assist communication in emergency, where UAV-mounted aerial base stations (BSs) can be flexibly deployed to cover ground users with high-speed and reliable line-of-sight (LoS) communication services\cite{shahzadi2021uav}. 

Besides reliable communication services, ground terminals such as rescue devices also demand accurate positioning services in emergency scenarios.  The global navigation satellite system (GNSS) can provide satisfactory positioning services in open-sky environments \cite{zhang2018quality}. However, under severe obstructions and scattering in mountainous or disaster environments, the positioning accuracy of GNSS plummets dramatically (e.g., from 5-10 meters root mean square error to tens of meters) and the positioning service suffers from frequent interruptions\cite{wang2021toward}. Besides, the GNSS has relatively low positioning  accuracy in vertical dimension, and thus it cannot provide accurate three-dimensional (3D) localization services required in many emergency scenarios\cite{xiao2022overview}. In addition to GNSS,  existing terrestrial cellular networks have independent localization capabilities, where  terrestrial BSs can serve as anchor nodes (ANs) to support a variety of localization methods such as observed time difference of arrival (OTDoA) positioning \cite{fischer2014observed}. Achieving high 3D localization accuracy requires not only sufficient number of ANs, but also spatial diversity in the deployment of ANs\cite{chen2016three}. For instance, to attain high vertical positioning accuracy, the anchors should differ noticeably in amplitude. When the ground ANs are co-planar, UAVs can be quickly and flexibly deployed as temporary aerial ANs to improve the overall localization accuracy, especially in the vertical dimension. Existing UAV deployment solutions mostly use UAVs for either communication or positioning purpose, and  lack efficient coordination with terrestrial networks. As a result, a large number of UAVs are needed  to ensure both communication and positioning performance in a large area.  In practice, this could lead to large networking delay, high control complexity and signalling overhead.

To minimize the UAV deployment cost of emergency networks,  we propose a dual-functional UAV paradigm that utilizes UAVs as both air-based  ANs and BSs to provide integrated localization and communication (ILAC) services to ground UEs. Specifically, we consider to deploy multiple dual-functional UAVs working collaboratively with terrestrial BSs to extend the communication and localization service coverage. The key design problem lies in the potential conflicting effects of UAV positions on the communication and localization performance of ground UEs. Intuitively, as an aerial BS, a UAV needs to be deployed close to the ground UEs for stronger communication link quality; as an aerial AN, however, such a close-to-ground UAV deployment may cause large vertical localization error due to its similar altitude with terrestrial anchors. To meet the diverse service requirements of ground UEs, the deployment of UAVs should coordinate with terrestrial BSs to achieve a balanced communication and localization performance. 

To this end, this paper aims to answer the following two key questions:

\textbf{Q1:} For a UE at a given target location, how to determine the position of a single dual-functional UAV \emph{in an analytical form}, such that the UAV and ground BSs can collaboratively satisfy the  communication and localization performance requirements of the UE.

\textbf{Q2:}  For a set of target UEs distributed over a large area, how to deploy minimum number of dual-functional UAVs \emph{within a short computational time}, such that the integrated air-ground network can meet the communication and localization performance requirements of all the UEs.

To understand the technical challenges of the above questions, we review in the following some related work using UAVs to provide communication and localization services.

\subsection{Related Work}

\subsubsection{UAV-assisted Communication}

There have been extensive studies in recent years on employing UAVs to assist  terrestrial communications. For instance,  UAVs are deployed in mobile hot-spots to offload cellular traffic, as aerial relays to tackle obstructions in the ground, as transceivers to collect/disseminate data from/to massive IoT devices in rural areas, and as temporary aerial stations in emergency, etc.  UAV deployment is one key design problem in UAV-assisted communications. Depending on the mobility of UAVs, existing studies are mainly divided into two streams: one optimizes the fixed locations of rotary-wing UAVs and the other designs the trajectories of mobile UAVs, including both rotary-wing and fixed-wing UAVs. In this paper, we focus on the former topic to deploy dual-functional UAVs at fixed locations. 

Considering a simplified one-dimensional (1D) scenario, the authors in \cite{al2014optimal} derive the formula of LoS probability between the UAV and the ground users, and optimize the UAV altitude to maximize the UAV communication coverage. For multiple UAVs, the authors in \cite{chen2017optimum} study the optimal altitude of UAVs as relays to minimize the end-to-end outage probability and bit error rate (BER). \cite{lyu2016placement} and  \cite{ali2020uav} further consider the UAV placement for communication in 2D scenario when the UAV altitude is fixed.  \cite{lyu2016placement} aims to minimize the number of UAVs to cover a group of ground users by sequentially deploying the UAVs along a spiral path from area perimeter toward the center.\cite{ali2020uav} jointly optimizes the UAV 2D locations and user transmission power to maximize the  user communication rate to both UAVs and multiple terrestrial BSs. Furthermore, recent studies \cite{ren2020joint,fan2018optimal,sabzehali20213d} investigate how to deploy UAVs in 3D space to improve the communication performance. Considering using a single UAV as a relay, \cite{ren2020joint} minimizes the decoding error probability in latency-critical scenarios by optimizing the location and power of the UAV. Using multiple UAVs as relays, \cite{fan2018optimal} searches for the optimal UAV locations to maximize the system communication rate. \cite{sabzehali20213d} investigates the problem of determining 3D placement and orientation of the minimal number of UAVs to guarantee the LoS coverage and the signal-to-noise ratio (SNR) between the UAV-UE pairs.

  \subsubsection{UAV-assisted Localization}
  
  Besides GNSS, a UE can also estimate its location from the positioning reference signals sent by cellular networks,  such as received signal strength (RSS), time of arrival (ToA) and time difference of arrival (TDoA)\cite{8226757}. Among them, location estimation using TDoA measurements is one popular method used in 4G/5G standard for high-precision localization. The estimation is based on the principle of linear least-square estimation, where at least four ANs are required to obtain an unbiased 3D location estimation. Additional ANs can provide redundancy in the estimation to counter noisy measurements. 
  
  Given the deployment of ANs, there have been extensive studies to optimize  the resource allocation such as power and bandwidth allocation to minimize the localization error. \cite{liu2018spectrum} and \cite{wu2019resource} study the resource allocation for UAV-assisted vehicle localization using ToA method. Both studies obtain the global optimal solution based on semi-definite programming (SDP) that minimizes the Cramer-Rao lower bound (CRLB) of the localization error. However, minimizing the localization error with respect to  UAV anchor deployment is difficult due to the non-convexity of CRLB regarding the UAV position. The current research on UAV deployment and trajectory optimization mainly adopts exploration algorithms, such as genetic algorithms,   predetermined-pattern trajectory optimization and particle swarm optimization \cite{sorbelli2018range,yang2021deployment,esrafilian2020three}. \cite{sorbelli2018range} plans a static path for a single UAV by selecting a subset of way-points from a predetermined point set to maximize the localization precision. In \cite{yang2021deployment}, the authors jointly optimize the UAV positions, power and bandwidth allocation to improve the localization accuracy of vehicles by Taylor expansion-based approximate searching algorithm. Based on RSS measurements, \cite{esrafilian2020three} designs the UAV trajectory to assist 3D map estimation using particle swarm optimization. In general, the existing exploration-based methods lack theoretical analysis on the optimal UAV deployment. As a result, the obtained solution often cannot guarantee localization performance and may induce high computational complexity in emergent UAV deployment applications. 

\subsubsection{Summary}

To summarize, communication-oriented UAV placement is well-understood from 1D to 3D scenarios. However, localization-oriented UAV placement lacks theoretical characterization and efficient optimization tools. In addition, most work studies the communication and localization problems separately, which may lead to high deployment cost and delay in emergency network. Reusing one UAV for providing both communication and localization services is an effective solution to reduce the network deployment cost, which however is also more challenging considering the potential performance conflict in the dual-functional UAV deployment. To achieve guaranteed quality of service (QoS), we need both theoretical analysis and low-complexity algorithm design for the deployment of dual-functional UAVs.

\subsection{Contributions}
 
 In this paper, we study the optimal deployment problem of dual-functional UAVs that operate collaboratively with ground BSs to provide both communication and localization services to ground UEs. In particular, we are interested in deploying minimal UAVs to satisfy the communication and localization performance requirements of UEs at a set of target locations. Our contributions are summarized as follows. 
 \subsubsection{\textbf{Low-cost networking with dual-functional UAVs}} We propose a new integrated air-ground networking paradigm that uses dual-functional UAVs to assist the ground networks for improving both communication and localization services. We consider a UE using OTDoA method to estimate its own location from the positioning signals sent by a UAV and three ground BSs. Besides, each UE communicates to either a UAV or a BS that yields the highest data rate. The proposed dual-functional UAV scheme significantly reduces the cost and delay in the network deployment under emergency.  

 \subsubsection{\textbf{Minimal UAV deployment problem formulation}} We formulate a minimal UAV deployment problem that 
 optimizes both the number and locations of UAVs to meet the communication and localization requirements of ground UEs. The problem is very challenging because of the intractable cardinality minimization, the combinatorial UAV-UE communication association, and non-convex UAV location optimization. We solve the problem in two steps. First, we derive the closed-form expression of feasible UAV location to meet the dual performance requirements of each UE. Then, we transform the minimum deployment  problem into an equivalent  graphical form and design an efficient algorithm accordingly.   
  \subsubsection{\textbf{Geometric characterization of feasible UAV location}} We first analyze the feasible deployment location of a UAV that meets the localization accuracy requirement of a UE. Instead of conventional CRLB performance metric, we propose to use \textit{D-optimality} as the key localization accuracy metric for analytical tractability. We show that, under the \textit{D-optimality} metric, the feasible location of the UAV can be characterized as a second-order cone in 3D space. Meanwhile, given a fixed altitude, it reduces to an ellipse in 2D projection plane. We derive the closed-form expressions of both the 3D and 2D feasible regions of the UAV.
  \subsubsection{\textbf{Efficient graphical solution algorithm}} With the geometrical characterization of feasible region, we show that the minimal deployment problem can be equivalently transformed to a \textit{minimum hitting set} problem, which is a classical NP-complete problem that lacks efficient solution. Based on the equivalent graphical formulation, we propose a low-complexity approximate algorithm to tackle the NP-hardness of the problem in large-size emergent networks. 
  
  Simulation results show that the proposed method achieves close-to-optimal performance with much lower computational complexity. Besides, compared with conventional scheme with four ground anchors to locate a 3D target, we find  that the proposed integrated air-ground network paradigm significantly  improves the localization accuracy in both horizontal and vertical directions. This is because the deployed UAV improves not only the altitude diversity, but also the anchor-target localizing link strength especially when the NLoS noise is large in ground-to-ground channels. Overall, the proposed UAV deployment method provides an efficient solution to guarantee both communication and localization performance requirements.

 The rest of paper is organized as follows. Section \ref{sec:sysmodel} introduces the system model of UAV-assisted air-ground ILAC scheme and we formulation the minimum UAV deployment problem in Section \ref{sec:problem}. In Section \ref{sec:analysis} and Section \ref{sec:solution}, we analyze the feasible UAV hovering regions and propose a low-complexity deployment algorithm.  Section \ref{sec:simulation} presents the simulation results to evaluate the algorithm performance. Lastly, Section \ref{sec:conclusion} concludes the paper.
 
\section{System Model} \label{sec:sysmodel}

As shown in Fig. \ref{fig:system}, we consider an emergency network consisting of 3 terrestrial BSs, $U$ rotary-wing UAVs and $K$ UEs. The locations of the $n$-th BS, the $u$-th UAV and the $k$-th UE are denoted as $\bm b_n=[x_n^b, y_n^b,h_n^b,], n \in \mathcal N \triangleq \{1,2,3\}$, $\bm b_u=[x_u^a,y_u^a,h_u^a],\forall u \in \mathcal U \triangleq \{1,2,...,U\}$ and $\bm m_k=[x_k,y_k,h_k], \forall k \in \mathcal K \triangleq \{1,2,...,K\}$, respectively.\footnote{Notice that the accurate user location is not known and to be estimated. The location $\bm m_k$ is in fact a target user location to receive communication and localization service, e.g., from initial coarse location estimation.  A naive way to set the target user locations is dividing the area of interest into equally separated points. Without causing confusions, we use target user location and user location interchangeably in this paper.} The  UAVs and BSs collaboratively provide communication and localization services to ground UEs.

\begin{figure}[ht]
	\centering
	\includegraphics[width=0.5\textwidth ]{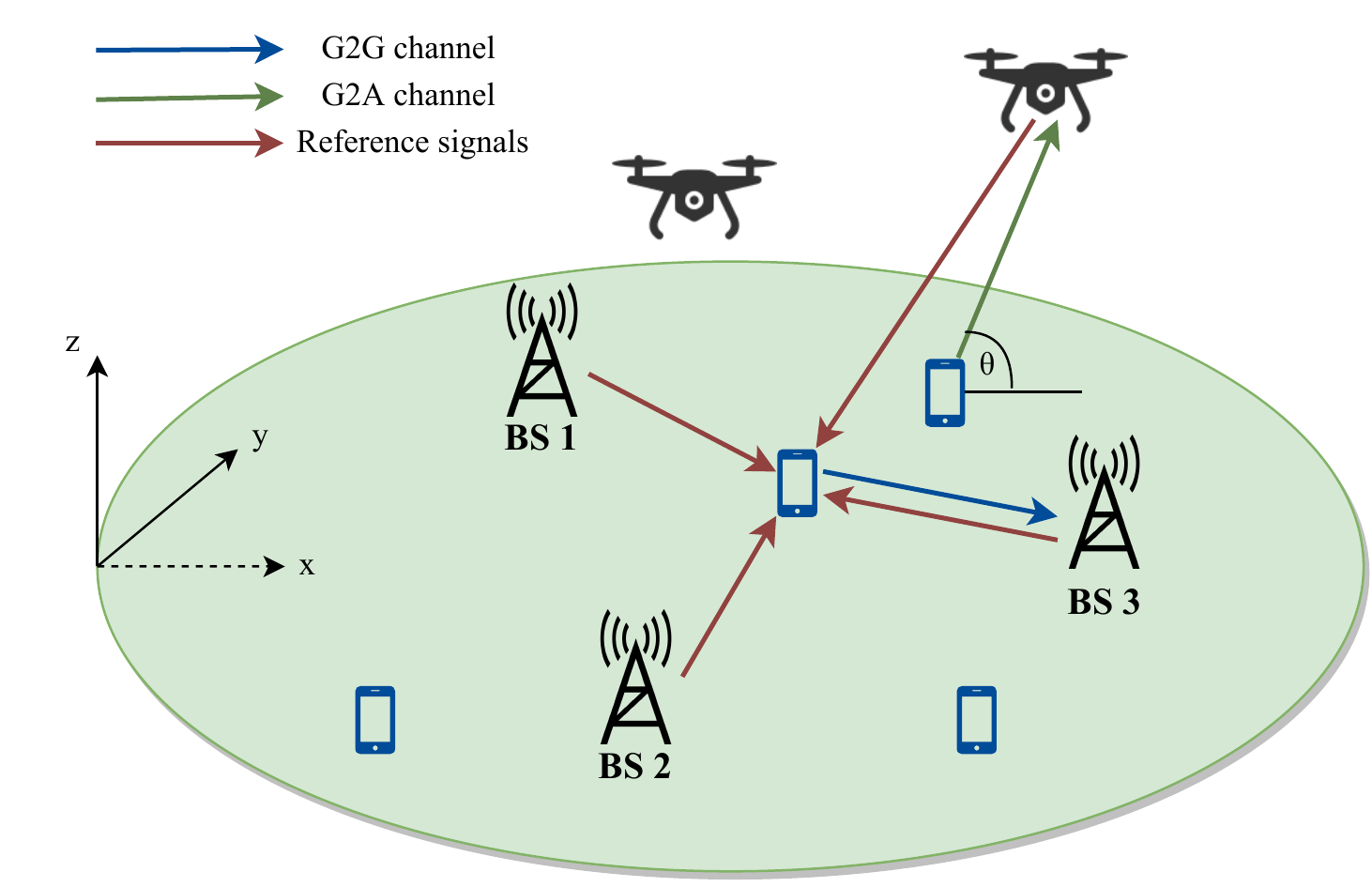}
	\caption{System model of the air-ground ILAC network.}
	\label{fig:system}
\end{figure}

\subsection{Communication Model}

 Considering the occasional blockage between the UAVs and the ground UEs, we adopt the commonly used probabilistic LoS channel model to determine the large-scale attenuation for the ground-to-air (G2A) links\cite{al2014optimal}. The probability of geometric LoS between the UAV and UE depends on statistical parameters related to the environment and the elevation angle. Specifically, we denote the LoS probability between the UAV at location $\bm b_u$ and UE $k$ as $\mathbb{P}(LoS,\theta_{ku})$, which can be approximated as a modified sigmoid function of the following form \cite{al2014optimal} 
 \begin{equation}
 	\mathbb{P}(LoS,\theta_{ku})=\tfrac{1}{1+e_1\exp \left(-e_2(\theta_{ku}-e_1 ) \right)} ,
 \end{equation}
 where $e_1$ and $e_2$ are environment-related parameters. $\theta_{ku}$ is the elevation angle given by
 \begin{equation}
 	\theta_{ku}= \tfrac{180}{\pi} \arcsin \left(\tfrac{|h_u^a-h_k| }{||\bm b_u -\bm m_k|| } \right),  
 \end{equation}
 and $\theta_{ku} \in [0^\circ,90^\circ]$. Accordingly, the expected channel power gain is equal to
\begin{equation}\label{eq:gain}
    g_{ku}=\tfrac{\mathbb{P}(LoS,\theta_{ku})+(1-\mathbb{P}(LoS,\theta_{ku}))\kappa }{\gamma_0 ||\bm b_u -\bm m_k||^\alpha } =\tfrac{\hat{\mathbb{P}}( LoS,\theta_{ku}) }{\gamma_0 ||\bm b_u -\bm m_k||^\alpha }, 
\end{equation}
where $\kappa<1$ is the attenuation effect of the NLoS channel, $\alpha \geq 2$ is the path loss exponent and $\gamma_0=(\tfrac{4\pi f_c}{c} )^2$ is the reference free space path loss at a distance of 1 m with  $f_c$ being the carrier frequency of the G2A channel and $c$ being the speed of light. Then, the uplink transmission rate (bits/s) between the user $k$ and the UAV is 
\begin{equation}\label{eq:g2arate}
    R_{ku}=W_{ku}\log_2\left(1+\tfrac{P_kg_{ku}}{W_{ku} N_0}\right)=W_{ku}\log_2\left(1+\tfrac{\hat{\mathbb{P}}( LoS,\theta_{ku})P_k }{W_{ku} N_0\gamma_0 ||\bm b_u -\bm m_k||^\alpha}\right) , 
\end{equation}
where $W_{ku}$ is the communication bandwidth, $P_{k}$ is the transmission  power of user $k$, $N_0$ is the noise power spectral density and $\alpha$ is the path loss exponent of the G2A channel. 

Due to the rich scattering environment in the ground, the ground-to-ground (G2G) channel between the user device and the terrestrial BS can be modeled as a Rayleigh fading channel, which consists of large-scale attenuation and time-varying small-scale fading. We denote the small-scale fading coefficient as $h_{\text{G2G} }(t)$ and $ \mathbb{E}[|h_{\text{G2G} }(t)|^2 ]=1$. The instantaneous capacity of the G2G channel is given by
\begin{align}
	R_{kn}(t)=W_{kn} \log_2\left(1+\tfrac{|h_{\text{G2G} }(t)|^2 P_k}{W_{kn} N_0 \gamma_0 ||\bm b_n -\bm m_k||^\beta}\right),
\end{align}
which changes dynamically in the fading channel. Given an outage probability tolerance $\varepsilon$,  an achievable transmission rate (bits/s) of the G2G Rayleigh fading channel can be calculated from Jensen's inequality as
\begin{align} \label{eq:g2grate}
	R_{kn}=W_{kn} \log_2\left(1+\tfrac{F^{-1}(\varepsilon ) P_k}{W_{kn} N_0 \gamma_0 ||\bm b_n -\bm m_k||^\beta}\right), 
\end{align}
where $F(\cdot)$ is the cumulative distribution function (CDF) of the Rayleigh fading coefficient, which can be obtained from empirical  measurements\cite{zhan2017energy}. The power gain of the G2G channel can be derived as
\begin{align} \label{eq:g2ggain}
	g_{kn}=\tfrac{F^{-1}(\varepsilon ) }{  \gamma_0 ||\bm b_n -\bm m_k||^\beta}. 
\end{align}

\subsection{Localization Model}

We consider the OTDoA positioning method to localize the ground UEs. To reduce the computational complexity of UE, each UE estimates its 3D location from the reference signals sent by four ANs, including all the three ground BSs and a UAV.  We assume that the accurate locations of terrestrial BSs and UAVs are known, since UAVs can use altimeters and satellite systems to locate themselves. Each UE measures the time-of-arrival (ToA) of localization signals from the aerial and terrestrial ANs. The TDoA observations are the time difference between the ToA of a reference node and the ToAs of the remaining ANs. Suppose that all ANs are accurately synchronized in time and BS 1 is the reference node, device $k$ estimates its location $\bm m_k$ by solving the TDoA equations:
\begin{align}\label{eq:tdoa}
	 \tau_{n1}&= (||\bm b_n-\bm m_k||- ||\bm b_1-\bm m_k|| )/c ,\quad n \in \mathcal U \cup  \{2,3\} , \end{align}
where $c$ is the propagation speed of the signals and $\tau_{n1}$ is the TDoA between the $n$-th anchor and the reference node BS 1. There exist many linear and non-linear estimation algorithms to obtain $\bm m_k$ from the TDoA equations\cite{9636988}. In this paper, we are interested in the theoretical bound of localization accuracy regardless of the specific estimation algorithm.   

Without loss of generality, we analyze the performance of the OTDoA method by considering the narrowband positioning reference signal (NPRS) in the narrowband IoT (NB-IoT) system. NPRS is an orthogonal frequency division multiplexing (OFDM) based downlink reference signal occupying 180 kHz bandwidth (one LTE resource block).  The variance of the ToA measurement using $N_{sub}$ OFDM subframes is expressed as \cite{del2012achievable}
\begin{align}\label{eq:vartoa}
	\sigma_{ \text{ToA}}^2=\tfrac{T_s^2}{N_{sub}\cdot 8\pi^2\cdot \sum_{s\in \mathcal S} \sum_{i \in \mathcal N_s }p_i^2 i^2 } \cdot \tfrac{1}{\text{SNR}} =\tfrac{\psi}{\text{SNR}}  ,
\end{align}
where $T_s$ and SNR are the symbol duration and signal-to-noise
ratio of the received signal, respectively. In one localization period, we have $N_{sub}$ OFDM subframes and each subframe have a set $\mathcal{S}=\{1,2,...,S\}$ of symbols containing NPRS signal. In symbol $s$, the subset of subcarriers containing NPRS signal is denoted as $\mathcal N_s$ and $p_i \in [0,1]$ is the relative power weight of each subcarrier $i$. According to \eqref{eq:vartoa}, the variance of the ToA measurement from  UAV $u$ to UE $k$ is given by
\begin{align}
	\sigma_{uk}^2= \psi \tfrac{WN_0}{  g_{ku} P_u},
\end{align}
where $W$ is the NPRS bandwidth, $N_0$ is the noise power density, $P_u$ is the transmission power of the UAV, and $g_{ku}$ is the power gain of G2A channel.

In the G2G channel,  advanced signal processing algorithms are needed to eliminate the impacts of multi-path and NLoS propagation on the ToA measurement. The random estimation error is commonly modelled as an additional Gaussian noise term with variance $\sigma_{\text{ NLoS}}^2$. With the power gain in \eqref{eq:g2ggain},  we derive the variance of the ToA measurements for the G2G channel as 
\begin{align}\label{eq:toag2g}
	\sigma_{nk}^2\triangleq \psi \tfrac{WN_0}{  g_{kn} P_n}+ \sigma_{\text{ NLoS}}^2=\psi \tfrac{WN_0\gamma_0 ||\bm b_n -\bm m_k||^\beta}{ F^{-1}(\varepsilon )  P_n} + \sigma_{\text{ NLoS}}^2 ,\quad n\in \mathcal{N} ,
\end{align}
where $P_n$ is the transmission power of the BS $n$. Then, the covariance matrix $\mathbf{R}_{\text{TDoA}}$ of the TDoA observations in unit $\text{s}^2$ is given by \cite{wang2019energy}
\begin{align} \label{eq:cov}
		\mathbf{R}_{\text{TDoA}}=\begin{bmatrix}
\text{var}(\tau_{21}) & \text{cov}(\tau_{21}, \tau_{31}) & \text{cov}(\tau_{21}, \tau_{u1})\\
\text{cov}(\tau_{31}, \tau_{21}) & \text{var}(\tau_{31}) & \text{cov}(\tau_{31}, \tau_{u1})\\
\text{cov}(\tau_{u1}, \tau_{21}) & \text{cov}(\tau_{u1}, \tau_{31})& \text{var}(\tau_{u1})
\end{bmatrix}
=\begin{bmatrix}
\sigma_{1k}^2+\sigma_{2k}^2 & \sigma_{1k}^2 & \sigma_{1k}^2\\
\sigma_{1k}^2 & \sigma_{1k}^2+\sigma_{3k}^2 & \sigma_{1k}^2\\
 \sigma_{1k}^2 & \sigma_{1k}^2& \sigma_{1k}^2+\sigma_{uk}^2
\end{bmatrix},
\end{align}
where var$(x)$ is the variance of variable $x$ and cov$(x,y)$ is the covariance of variables $x$ and $y$.

\subsection{Localization Performance Metrics}
The Fisher information matrix (FIM) quantifies the amount of information about the unknown parameter that measurement vector carries. The FIM for OTDoA positioning measurements in \eqref{eq:tdoa} is defined as 
\begin{align}
	\mathbf{F}=\mathbf{H} ^T \mathbf{R}^{-1}\mathbf{H},
\end{align}
where $\mathbf{R}=c^2\cdot \mathbf{R}_{\text{TDoA}}$ is the covariance matrix of OTDoA positioning in unit $\text{m}^2$, and   $ \mathbf{H}$ is the Jacobian matrix of the TDoA equations
\begin{align}
		\mathbf{H}=\begin{bmatrix}
\tfrac{\partial \tau_{21}}{\partial x_k} & \tfrac{\partial \tau_{21}}{\partial y_k} & \tfrac{\partial \tau_{21}}{\partial h_k}\\
\tfrac{\partial \tau_{31}}{\partial x_k} & \tfrac{\partial \tau_{31}}{\partial y_k} & \tfrac{\partial \tau_{31}}{\partial h_k}\\
\tfrac{\partial \tau_{u1}}{\partial x_k} & \tfrac{\partial \tau_{u1}}{\partial y_k} & \tfrac{\partial \tau_{u1}}{\partial h_k}
\end{bmatrix},
\end{align}
and the elements in the first and second rows of $\mathbf{H}$ can be derived as 
\begin{align} \label{eq:helement}
	\tfrac{\partial \tau_{n1}}{\partial x_k} = \tfrac{x_n^b- x_k}{||\bm b_n-\bm m_k||}- \tfrac{ x_1^b -x_k}{ ||\bm b_1-\bm m_k||},~
	\tfrac{\partial \tau_{n1}}{\partial y_k} = \tfrac{y_n^b- y_k}{||\bm b_n-\bm m_k||}- \tfrac{ y_1^b -y_k}{ ||\bm b_1-\bm m_k||},~
	\tfrac{\partial \tau_{n1}}{\partial h_k} = \tfrac{h_n^b- h_k}{||\bm b_n-\bm m_k||}- \tfrac{ h_1^b -h_k}{ ||\bm b_1-\bm m_k||},
\end{align}
with $n=2,3$. The elements in the third row of $\mathbf{H}$ can be obtained  from \eqref{eq:helement} by replacing $\bm b_n$ with $\bm b_u$. There are several localization performance metrics defined based on the FIM. A common metric of localization accuracy is CRLB, which represents the theoretical lower bound of mean square estimation error. CRLB can be expressed as  a function of UE location and UAV location $L(\bm m_k, \bm b_u )= \text{Tr} \left( \mathbf{F}^{-1} \right)$, where Tr($\cdot$) means the trace of matrix. However, it is difficult to derive an analytical expression of the feasible UAV location with CRLB. In this paper, we adopt the \textit{D-optimality} criterion to analyze the localization performance, which is a well-known performance metric of localization accuracy \cite{ucinski2004optimal}. With the \textit{D-optimality} metric, we will be able to find intuitive and meaningful geometrical characterizations of UAV deployment in Section \ref{sec:analysis}.  Specifically, \textit{D-optimality} seeks to maximize the following opt-$D$ value 
 \begin{align}
 	\text{opt-}D=\det (\mathbf{F}^{-1})= \tfrac{\det(\mathbf{H})^2}{\det(\mathbf{R})}. 
 \end{align}

The determinant of $\mathbf{F}^{-1}$ is inversely proportional to the uncertainty area of an unbiased location estimation \cite{zheng2018oparray}.  Intuitively, a \textit{D-optimality} design minimizes the uncertainty ellipsoid, or equivalently, minimize the average location estimation error.

\section{Problem Formulation}\label{sec:problem}

Before formulating the UAV deployment optimization problem, we first provide a motivating example of using UAV to assist the ground localization service. In particular, we show that the proposed UAV-assisted localization method can effectively  improve the localization accuracy of the conventional scheme using only terrestrial BSs. For the ease of exposition, we divide an area $\mathcal A= [0,600] ~\text{m}\times[0,600]~\text{m}$   into small boxes with width of 10 m each. Three BSs are deployed at location $(0,0,30)$, $(500,0,30)$ and $(250,433,30)$ with unit meter. At each candidate location, we optimize the horizontal placement of the fourth AN at $h=30$ m and calculate the CRLB of the localization variance. For comparison, we place the UAV as the fourth AN at a fixed amplitude 200 m and similarly optimize its horizontal location.\footnote{Detailed simulation parameters are provided in the simulation section. }

\begin{figure}[ht]
	\centering
	\subfigure[Horizontal variance without UAV ]{\includegraphics[width=0.45\textwidth ]{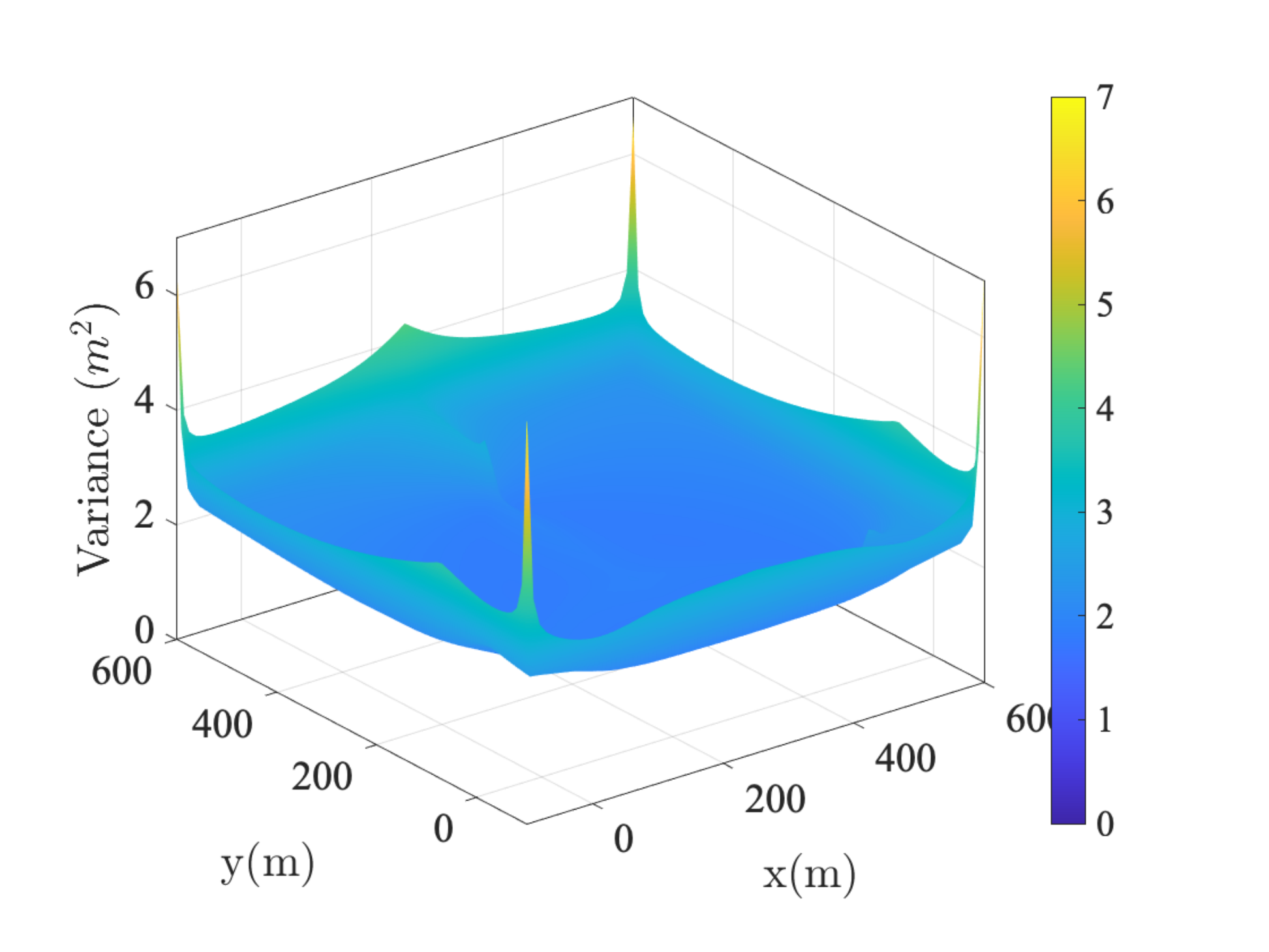}}
	\subfigure[Horizontal variance with UAV ]{\includegraphics[width=0.45\textwidth ]{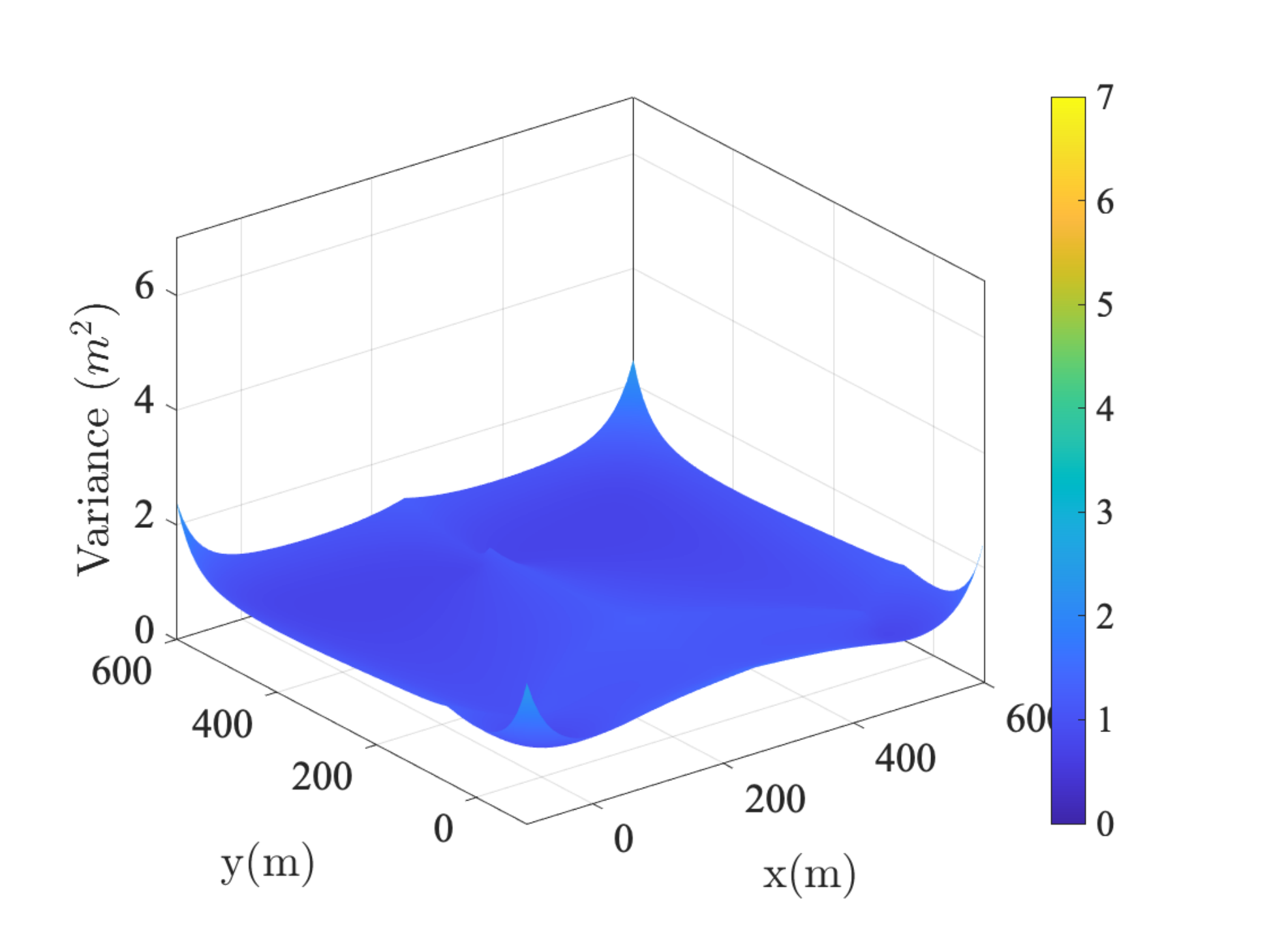}}
	\subfigure[Vertical variance without UAV ]{\includegraphics[width=0.45\textwidth ]{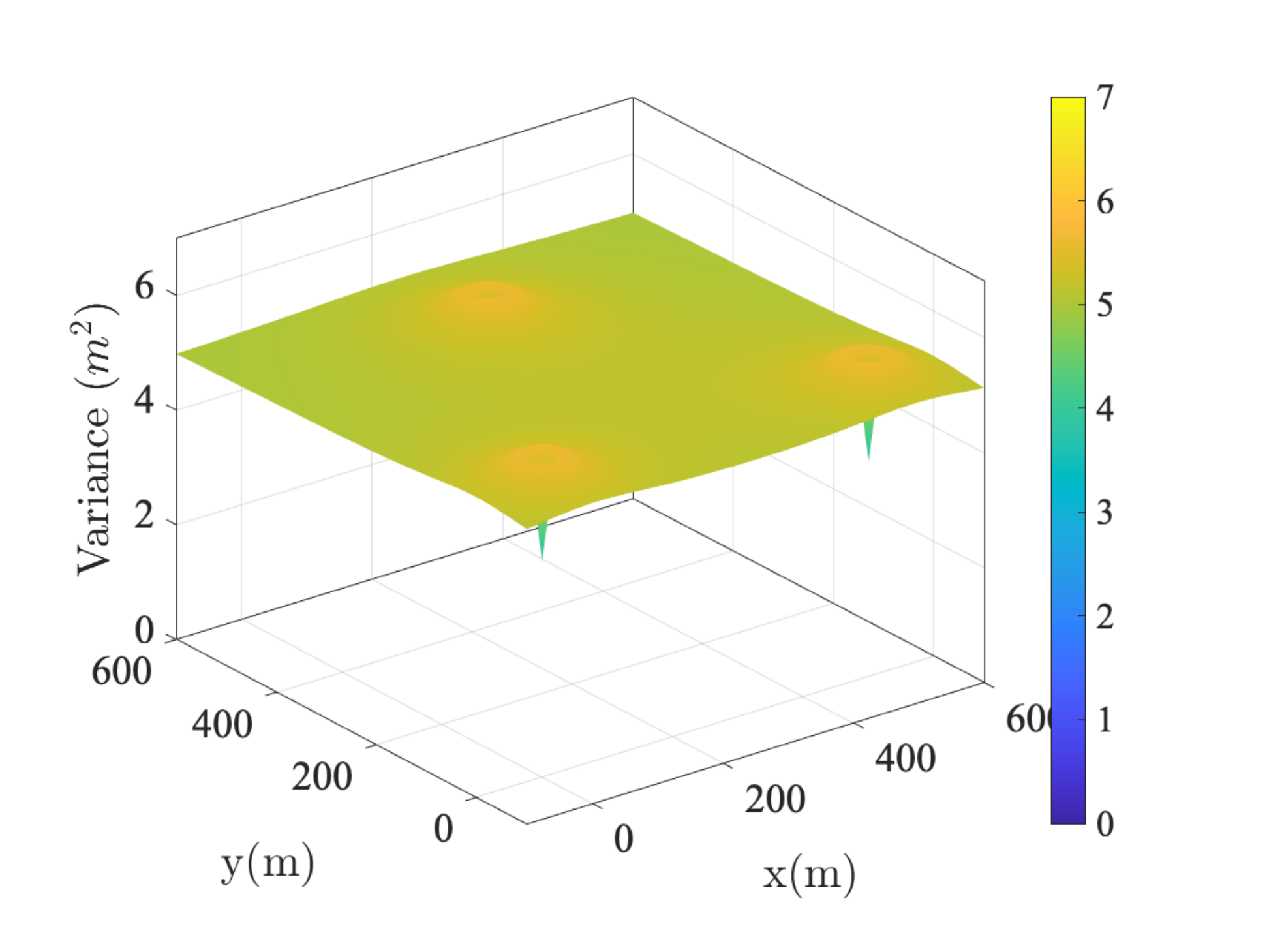}}
	\subfigure[Vertical variance with UAV]{\includegraphics[width=0.45\textwidth ]{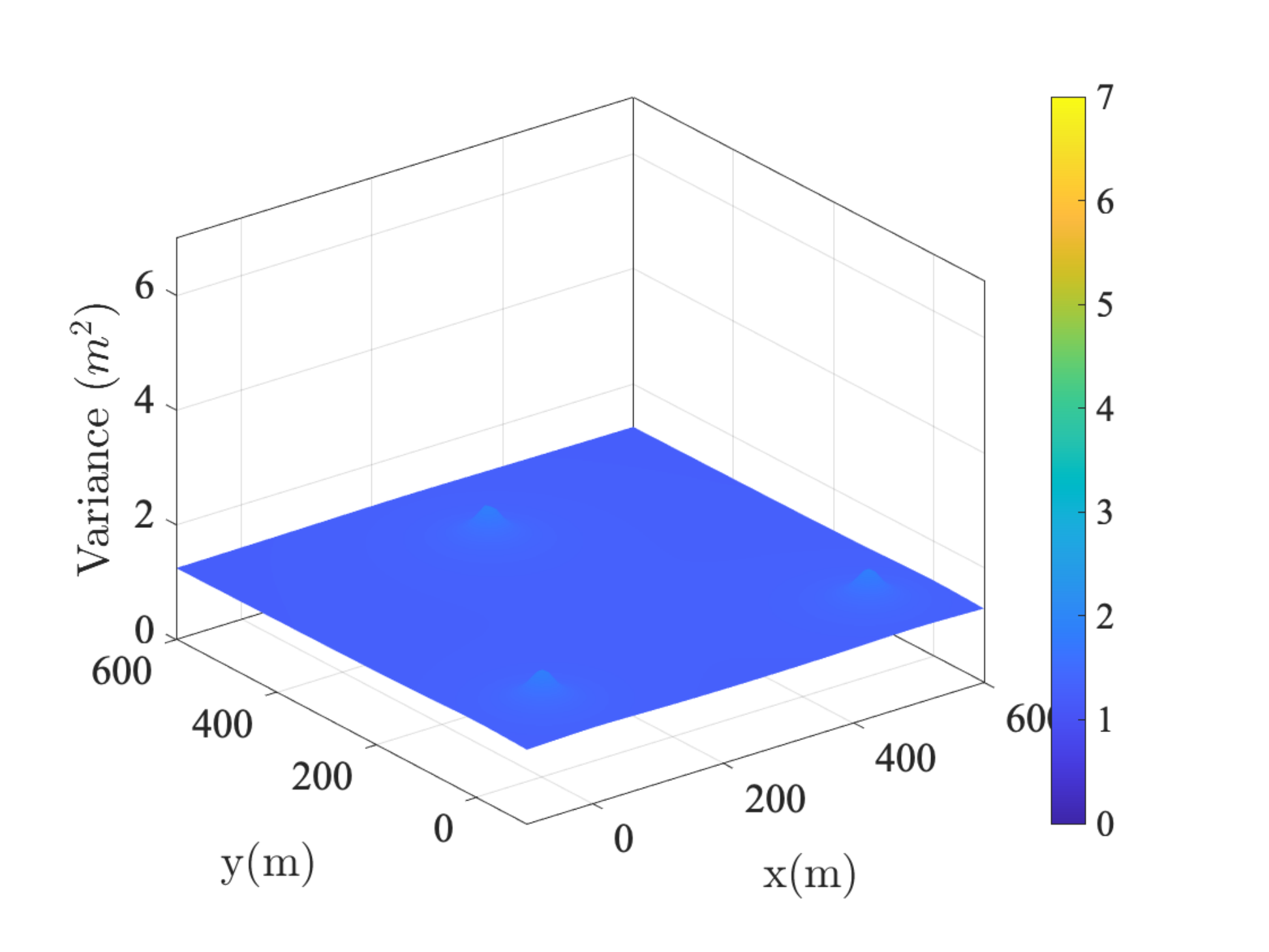}}
	\caption{Localization estimation variances under proposed design and conventional design, $\sigma^2_{NLoS}=4\times 10^{-17} $ $\text{s}^2$ }
	\label{fig:nloscomp}
\end{figure}

After traversing all the points to be localized, we generate the heap maps in Fig. \ref{fig:nloscomp} according to the recorded estimation variance, when $\sigma^2_{NLoS}$ is equal to $4\times 10^{-17}~\text{s}^2$. Each point in the heat map shows the minimum localization variance achievable by optimizing the location of the fourth AN, either the ground BS or UAV.   Fig. \ref{fig:nloscomp}(a) and Fig. \ref{fig:nloscomp}(b) show the horizontal variance under both designs. Without the UAV, the horizontal variance is in the range of $[1.80,7.12]$. With UAV as the additional AN, the horizontal variance reduced to the range of $[0.71,2.45]$. The improvement is more significant in the vertical direction as shown in   Fig. \ref{fig:nloscomp}(c) and Fig. \ref{fig:nloscomp}(d). The variance ranges are $[3.75,5.63]$ and $[1.25,1.88]$ for the conventional design and the UAV-assisted design, respectively.

\begin{figure}[ht]
	\centering
	\subfigure[Average localization variance versus $\sigma_{NLoS}^2$] {\includegraphics[width=0.45\textwidth ]{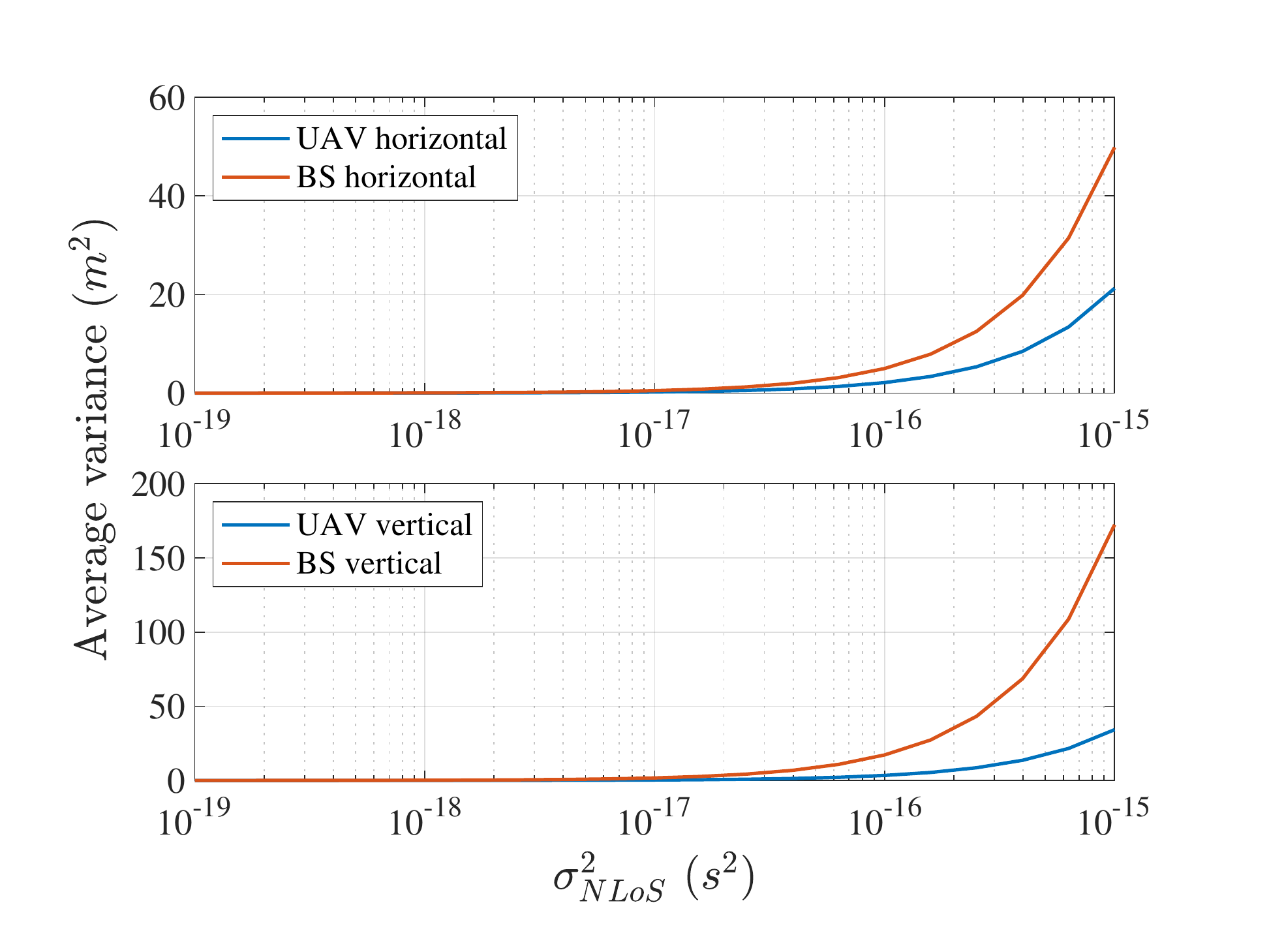}}
	\subfigure[Localization and communication capabilities]{
	\includegraphics[width=0.4\textwidth]{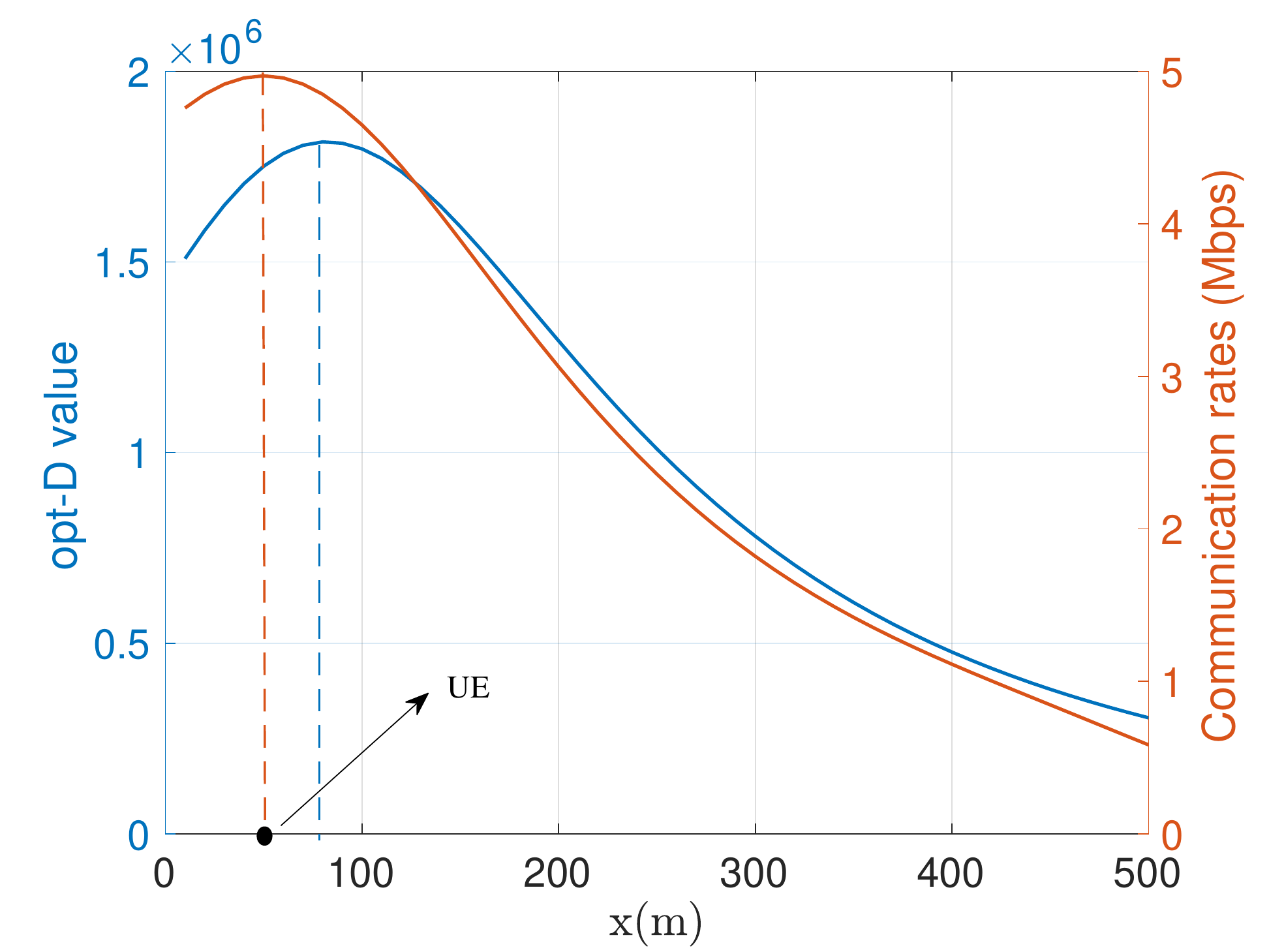}}
	\caption{Variance and capability comparisons}
	\label{fig:nlos}
\end{figure}

In Fig. \ref{fig:nlos}, we observe the localization accuracy improvement and the conflicting impact of UAV placement.  Fig. \ref{fig:nlos}(a) shows the average variance under different NLoS noise. Each sample point is the average performance of 10 localization points distributed in the area with random height from the range $[10,20]$ m. For both methods, the average variance will increase when the NLoS noise becomes larger. The UAV-assisted localization method produces much lower localization variance than the conventional scheme in both horizontal and vertical directions. The improvement is more significant in the vertical direction when $\sigma_{NLoS}^2$ is large. For example, when $\sigma_{NLoS}^2$ is equal to $10^{-15}~\text{s}^2$, the proposed method reduces $57.3\%$ horizontal variance and $80.1\%$ vertical variance.  To achieve localization accuracy at 1 m precision in both horizontal and vertical directions, the $\sigma_{NLoS}^2$ is required to be lower than $4\times 10^{-18}~\text{s}^2$ for the conventional method and $2.5\times 10^{-17}~\text{s}^2$ for the UAV-assisted method. The above  observations demonstrate the advantage of integrating UAVs for accurate 3D localization, especially in the vertical dimension. That is, the amplitude of UAV AN provides not only strong LoS link quality, but also the critical diversity in AN locations.

With fixed terrestrial BSs, we aim to deploy a minimum number of UAVs to satisfy the localization and communication requirements of all UEs. The problem is formulated as 
\begin{subequations}\label{eq:p1}
\begin{align}
	\mathcal{P}1: \quad& \min_{\{\bm b_u \}_{u\in \mathcal U} } \quad |\mathcal U |,\\
	\text{s.t.}\quad & \max_{\{\bm b_u \}_{u\in \mathcal U}} ~  \text{opt-}D(\bm m_k,\bm b_u ) \geq \epsilon_k, \forall k \in \mathcal K,\\
	& \max_{n\in \mathcal {U}\cup\mathcal{N} } R_{kn} \geq R_k^{th}, \forall k \in \mathcal K,\\
	& \bm b_u \in \mathcal A, \forall u\in \mathcal U, 
\end{align}
\end{subequations}
where $|\mathcal U |=U$ is the number of UAVs and $\mathcal A$ is the restricted area to deploy the UAVs. Constraint (\ref{eq:p1}b) is the localization accuracy requirement specified by an individual threshold $\epsilon_k$. Constraint (\ref{eq:p1}c) means that the communication rate of UE $k$ should be larger than the threshold $R_k^{th}$. The maximization term in (\ref{eq:p1}b) means that each UE estimates its location using all the three ground BSs and only one UAV. Meanwhile, the maximization term in (\ref{eq:p1}c) means that each UE transmits information to only one base station, either a UAV or a ground BS. 

The optimization problem is a cardinality minimization, which is NP-hard in general. An intuitive methodology is to search feasible solutions given fixed cardinality. However,  this does not reduce the difficulty because the localization constraints are non-convex in regard to the UAV positions. In addition, $\mathcal P1$ contains two implicit association problems between UEs and BSs or UAVs, which are hard combinatorial optimization problems in multi-UAV and multi-BS scenarios.  Another major difficulty arises from the potential conflicting impact of UAV placement to the communication and localization performance. As an illustrating example, in Fig. \ref{fig:nlos}(b), we constraint the UAV flying at a fixed amplitude and in a straight line passing over a UE at $x=50$ m. We see that from $x=50$ to $80$ m, the communication rate drops while the localization accuracy improves.

 Overall,  $\mathcal P1$ is difficult to solve, and it is even difficult to determine whether there exists a feasible solution satisfying all the constraints given the number of UAVs. In the next section, we analyze the localization performance constraints (\ref{eq:p1}b) and derive closed-form expressions of  feasible UAV hovering region.

\section{Geometric Characterization of Localization Constraints} \label{sec:analysis}
  In this section, we characterize the localization performance constraint (\ref{eq:p1}b) in a geometric form. This enables the design of efficient graph-based algorithm to solve $\mathcal{P}1$ in the next section. 

\subsection{Approximation of opt-$D$ Value}
To facilitate the analysis,  we propose in this subsection an approximation of the opt-$D$ value for a tagged UE $k$. For simplicity, we drop the index $k$ and denote $\sigma_{nk}$ as $\sigma_n$. We calculate the determinant of the covariance matrix
 \begin{align} \label{eq:optd}
 	\det(\mathbf{R})=c^6 \sigma_1^2\sigma_2^2\sigma_3^2+c^6 (\sigma_1^2\sigma_3^2+\sigma_1^2\sigma_2^2+\sigma_2^2\sigma_3^2)\sigma_u^2 \triangleq D_1+D_2 \cdot \tfrac{ \psi}{\text{SNR}_u},  
 \end{align}
where  $D_1\triangleq c^6 \sigma_1^2\sigma_2^2\sigma_3^2, D_2\triangleq c^6( \sigma_1^2\sigma_3^2+\sigma_1^2\sigma_2^2+\sigma_2^2\sigma_3^2)$. Supposing that $D_1 \gg D_2 \cdot \tfrac{ \psi}{\text{SNR}_u}$, we can ignore the second term and  approximate $\det(\mathbf{R})$ using $D_1$. Then, we denote the corresponding value as opt-$D_1$ 
 \begin{align}\label{eq:optd1}
 	\text{opt-}D_1=\tfrac{\det(\mathbf{H})^2}{D_1}.
 \end{align}
 In contrast, when $D_1 \ll D_2 \cdot \tfrac{ \psi}{\text{SNR}_u}$, we ignore the constant term $D_1$ and approximate $\det(\mathbf{R})$ using the second term. We denote this approximation as opt-$D_2$, where 
  \begin{align}
 	\text{opt-}D_2=\tfrac{\det(\mathbf{H})^2 \psi }{D_2\text{SNR}_u }.
 \end{align}
 In the following lemma, we show the condition when  opt-$D_1$ is an accurate approximation.
\begin{lemma}
  When $\frac{\text{SNR}_u}{\text{SNR}_n}\gg 3-\tfrac{ \sigma_{\text{ NLoS}}^2}{\sigma_u^2}, \forall n\in \mathcal{N}$,   opt-$D$  converges to  opt-$D_1$. 
\end{lemma}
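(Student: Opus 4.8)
The plan is to reduce the stated convergence to a smallness condition on the additive correction term of $\det(\mathbf{R})$ in \eqref{eq:optd}, and then to distribute that condition evenly across the three ground anchors. First I would observe that the Jacobian $\mathbf{H}$ depends only on the anchor–target geometry, so $\det(\mathbf{H})^2$ is a common factor of both $\text{opt-}D=\det(\mathbf{H})^2/\det(\mathbf{R})$ and $\text{opt-}D_1=\det(\mathbf{H})^2/D_1$. Consequently $\text{opt-}D\to\text{opt-}D_1$ is equivalent to $\det(\mathbf{R})/D_1\to 1$. Using the decomposition $\det(\mathbf{R})=D_1+D_2\,\sigma_u^2$, I would write
\begin{align}
\frac{\det(\mathbf{R})}{D_1}=1+\frac{D_2}{D_1}\,\sigma_u^2,
\end{align}
so the task becomes showing that the relative correction $\tfrac{D_2}{D_1}\sigma_u^2$ is negligible under the hypothesis.

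Next I would simplify $D_2/D_1$. Substituting $D_1=c^6\sigma_1^2\sigma_2^2\sigma_3^2$ and $D_2=c^6(\sigma_1^2\sigma_3^2+\sigma_1^2\sigma_2^2+\sigma_2^2\sigma_3^2)$ and cancelling the common factors gives $D_2/D_1=\sum_{n\in\mathcal N}1/\sigma_n^2$, so the correction term is exactly $\sum_{n\in\mathcal N}\sigma_u^2/\sigma_n^2$. The problem is thereby reduced to bounding a sum of three anchor-wise ratios $\sigma_u^2/\sigma_n^2$. I would then evaluate each ratio in terms of SNRs: using $\sigma_u^2=\psi/\text{SNR}_u$ (hence $\psi=\sigma_u^2\,\text{SNR}_u$) and $\sigma_n^2=\psi/\text{SNR}_n+\sigma_{\text{NLoS}}^2$, dividing numerator and denominator by $\sigma_u^2$ yields
\begin{align}
\frac{\sigma_u^2}{\sigma_n^2}=\frac{1}{\dfrac{\text{SNR}_u}{\text{SNR}_n}+\dfrac{\sigma_{\text{NLoS}}^2}{\sigma_u^2}}.
\end{align}

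To force the three-term sum below unity it suffices that each summand fall below $1/3$, i.e. that the denominator above exceed $3$, which is precisely $\tfrac{\text{SNR}_u}{\text{SNR}_n}\gg 3-\sigma_{\text{NLoS}}^2/\sigma_u^2$ for every $n\in\mathcal N$. This recovers the stated hypothesis, and under it $\sum_{n\in\mathcal N}\sigma_u^2/\sigma_n^2\ll 1$, so $\det(\mathbf{R})/D_1\to 1$ and therefore $\text{opt-}D\to\text{opt-}D_1$.

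The main obstacle is interpretive rather than computational: the statement uses the informal symbols ``$\gg$'' and ``converges,'' so I would make the argument precise by fixing a tolerance $\delta>0$, requiring each ratio $\sigma_u^2/\sigma_n^2<\delta/3$, and deducing the relative error bound $|\det(\mathbf{R})/D_1-1|<\delta$, equivalently $|\text{opt-}D/\text{opt-}D_1-1|<\delta$. The only genuine subtlety is that the per-anchor condition is a sufficient, worst-case reduction of the aggregate bound $\sum_n\sigma_u^2/\sigma_n^2\ll 1$ — the constant $3$ arises from spreading the unit budget evenly over the three ground anchors — so I would state the result as a sufficient condition rather than an equivalence, and I would note that the $\forall n$ quantifier is exactly what guarantees that the worst ground link still contributes a negligible term.
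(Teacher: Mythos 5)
Your proof is correct and follows essentially the same route as the paper: both arguments reduce the claim to $D_1\gg D_2\,\sigma_u^2$ via the decomposition of $\det(\mathbf{R})$ in \eqref{eq:optd} and then translate the per-anchor hypothesis into $\sigma_u^2/\sigma_n^2\ll 1/3$ for each $n$, summing over the three ground anchors. Your version merely rephrases the paper's chain of implications in relative-error form (and adds the explicit $\delta$-tolerance formalization and the remark that the condition is only sufficient, which the paper also notes).
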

\begin{proof}[Proof:\nopunct] The condition implies
\begin{align}
& \tfrac{\text{SNR}_u}{\text{SNR}_n}\gg 3-\tfrac{ \sigma_{\text{ NLoS}}^2\text{SNR}_u}{\psi} \Rightarrow  \tfrac{\text{SNR}_u}{\text{SNR}_n}+\tfrac{ \sigma_{\text{ NLoS}}^2\text{SNR}_u}{\psi}\gg 3 \notag \\
\Rightarrow & \tfrac{\psi}{\text{SNR}_n}+\sigma_{\text{ NLoS}}^2  \gg \tfrac{3\psi}{\text{SNR}_u} \Rightarrow \tfrac{\sigma_n^2}{3}  \gg \sigma_u^2,~\forall n \in N \notag \\
    \Rightarrow  & \tfrac{\sigma_1^2\sigma_2^2\sigma_3^2}{3}  \gg \sigma_2^2\sigma_3^2\sigma_u^2,\tfrac{\sigma_1^2\sigma_2^2\sigma_3^2}{3}  \gg \sigma_1^2\sigma_3^2\sigma_u^2,\tfrac{\sigma_1^2\sigma_2^2\sigma_3^2}{3}  \gg \sigma_1^2\sigma_2^2\sigma_u^2.  
\end{align}
Summing up both sides of the three inequalities, we have
\begin{align}
\sigma_1^2\sigma_2^2\sigma_3^2 \gg (\sigma_1^2\sigma_3^2+\sigma_1^2\sigma_2^2+\sigma_2^2\sigma_3^2)\sigma_u^2 
  \Leftrightarrow  D_1 \gg D_2 \cdot \tfrac{ \psi}{\text{SNR}_u},
\end{align}
which is a sufficient condition for opt-$D$ converging to opt-$D_1$.
\end{proof}

 In practice, the random estimation error $\sigma_{\text{ NLoS}}^2$ could be larger than the noise variance of G2A channel $\sigma_u^2$,  and thus the condition for Lemma 1 is easily satisfied. In Fig. \ref{fig:dvalues}(a), we compare the two approximations and the opt-$D$ value with different ratios $\frac{\text{SNR}_u}{\text{SNR}_n} \in [10^{-2}, 10] $, and depict the values in a log scale. We observe that opt-$D_1$ is much more accurate than opt-$D_2$ within the considered ratio range. In Fig. \ref{fig:dvalues}(b), the opt-$D$ and opt-$D_1$  values are almost on top with each other when the ratio is greater than 1, and the gap is less that 2\% when the ratio is larger than 0.1. Therefore, we can safely adopt opt-$D_1$ as an accurate approximation in the following analysis. 
 
\begin{figure}[ht]
	\centering
	\subfigure[The logarithm of $\det(\mathbf{R})$ versus the SNR ratio]{ \includegraphics[width=0.45\textwidth ]{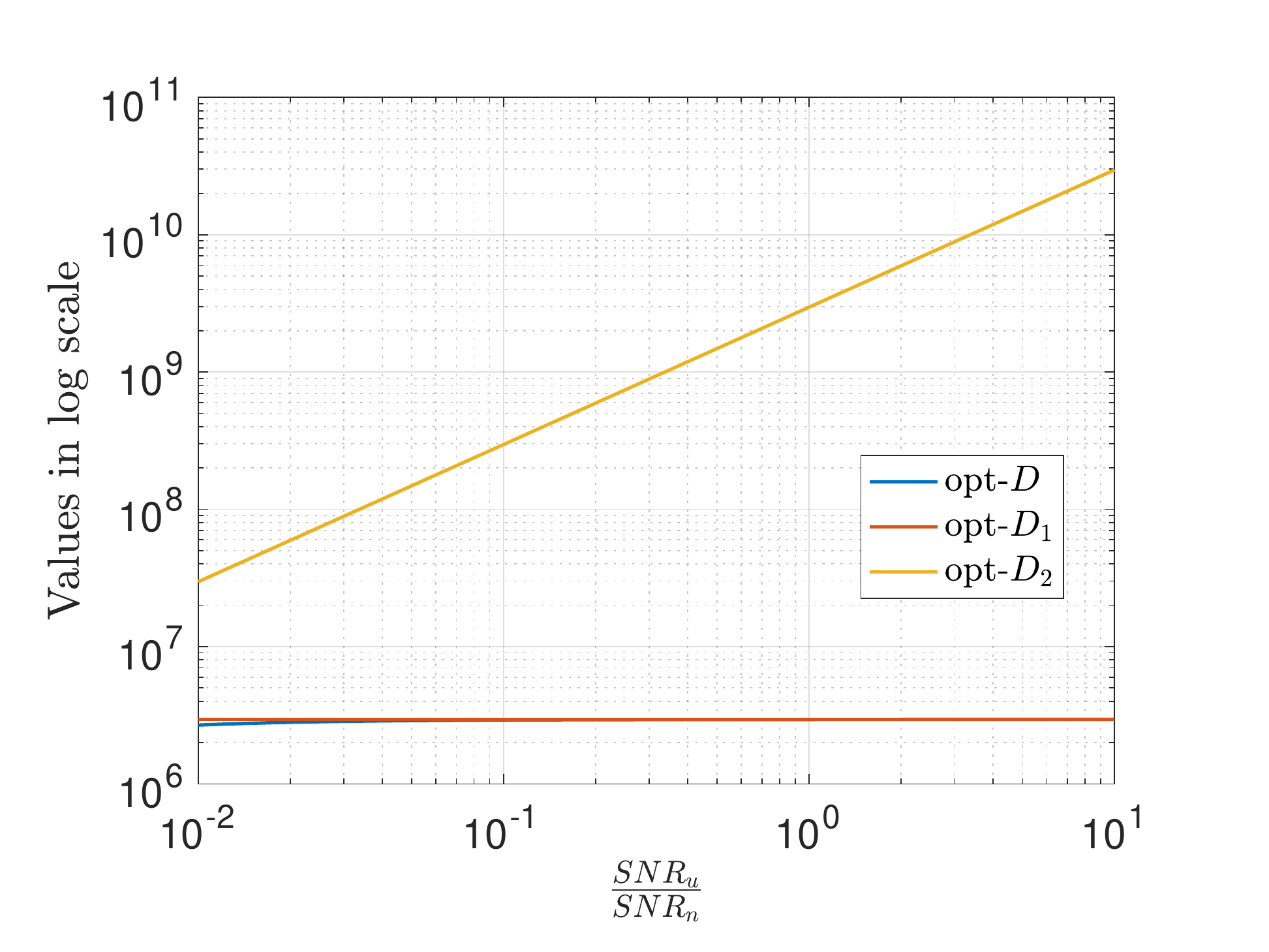}}
	\subfigure[opt-$D_1$  and opt-$D$ ]{\includegraphics[width=0.45\textwidth ]{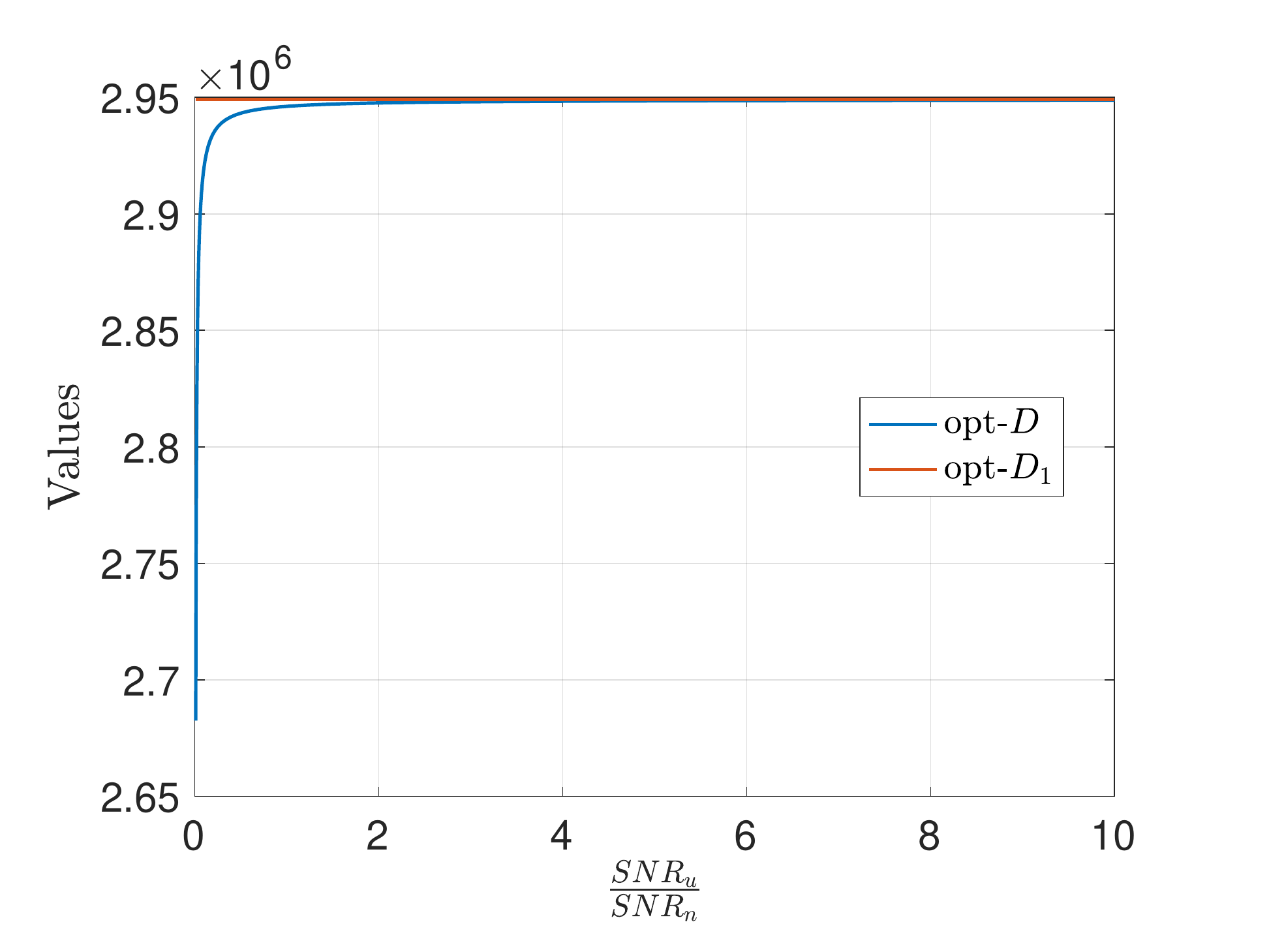}}
	\caption{Approximations of opt-$D$ values}
	\label{fig:dvalues}
\end{figure}

\subsection{Equivalent Geometric Characterization}
 Firstly, we define the unit vector from the $u$-th UAV to a location point $ \bm m=[x_0,y_0,h_0]$ as $ \bm q_u$, and that from ground AN $n$ to $\bm m$ as $ \bm q_n$, where
\begin{align}
    \bm q_u &= \tfrac{\bm{b_u-m}}{||\bm{b_u-m}||}=[q_{u1},q_{u2},q_{u3}], \forall u \in \mathcal{U},\label{eq:qu} \\
    \bm q_n &= \tfrac{\bm{b_n-m}}{||\bm{b_n-m}||}=[q_{n1},q_{n2},q_{n3}], \forall n\in \mathcal{N}. \label{eq:qn}
\end{align}

Then, $\mathbf {H}$ can be represented by the unit vectors
\begin{align}
	\mathbf {H}=\begin{bmatrix}
		\bm q_2-\bm q_1\\
		\bm q_3-\bm q_1\\
		\bm q_u-\bm q_1
	\end{bmatrix},
\end{align}
and we explicitly express $\det(\mathbf{H})$ using the elements in \eqref{eq:helement} such that
 \begin{align}\label{eq:detH}
 	\det (\mathbf{H})= \alpha_1\left(\tfrac{x_u^a-x_0}{||\bm b_u-\bm m||}- q_{11} \right)+ \alpha_2\left(\tfrac{y_u^a-y_0}{||\bm b_u-\bm m ||}- q_{12} \right)+ \alpha_3\left(\tfrac{h_u^a-h_0}{||\bm b_u-\bm m ||}- q_{13} \right),
 \end{align}
 where the coefficients $\alpha_n$ for $n=1,2,3$ are derived using \eqref{eq:qn}
 \begin{align}
 	\alpha_1&=(q_{22}-q_{12})(q_{33}-q_{13})-(q_{23}-q_{13})(q_{32}-q_{12}),\\
 	\alpha_2&=(q_{23}-q_{13})(q_{31}-q_{11})-(q_{21}-q_{11})(q_{33}-q_{13}),\\
 	\alpha_3&=(q_{21}-q_{11})(q_{32}-q_{12})-(q_{22}-q_{12})(q_{31}-q_{11}).
 \end{align}

The localization requirement $ \text{opt-}D \geq \epsilon$ with opt-$D_1$  approximation in \eqref{eq:optd1} becomes
\begin{align}\label{eq:locreq}
\tfrac{\det(\mathbf{H})^2}{D_1}	\geq \epsilon.
\end{align}

Based on the value of $\det(\mathbf{H})$, we derive the implicit geometry of \eqref{eq:locreq} in two cases. 
\subsubsection{Case 1} 

When $ \det(\mathbf{H})\geq 0 $, constraint \eqref{eq:locreq} is equivalent to 
\begin{align}\label{eq:d1ineq}
 	 \alpha_1(x_u^a-x_0) +\alpha_2(y_u^a-y_0)+\alpha_3(h_u^a-h_0)    \geq \tilde \epsilon_1 ||\bm b_u-\bm m ||,
 \end{align}
 where $\tilde \epsilon_1=\sqrt{D_1 \epsilon } +\alpha_1 q_{11}+\alpha_2 q_{12}+\alpha_1 q_{13}$. The constraint \eqref{eq:d1ineq} is a standard second-order cone constraint of dimension 4, and it is equivalent to 
 \begin{align}\label{eq:secondcone}
     \alpha_1 q_{u1} +\alpha_2 q_{u2}+\alpha_3 q_{u3}  \geq \tilde \epsilon_1.
 \end{align}
 
 Geometrically, the relative UAV location $\bm q_u = [q_{u1},q_{u2},q_{u3}]$ that satisfies \eqref{eq:secondcone} can be viewed as the intersection of the sphere $\mathbb{S}=\{[x,y,h]: x^2+y^2+h^2=1 \} $ and the half space $\mathbb{H}=\{[x,y,h]: \alpha_1 x +\alpha_2 y+\alpha_3 h \geq \tilde \epsilon_1 \}$. The normal vector of the plane $\mathbb{P}=\{[x,y,h]: \alpha_1 x +\alpha_2 y+\alpha_3 h =\tilde \epsilon_1 \}$ is $\bm n=[\alpha_1,\alpha_2,\alpha_3]$ and the perpendicular distance from the origin to the plane is $\rho = \tfrac{|\tilde \epsilon_1 |}{\sqrt{\alpha_1^2 +\alpha_2^2+\alpha_3^2}} $.    The intersecting circle has radius $r=\sqrt{1-\rho^2}$ and its center is 
 \begin{align}
 	\bm c_0= \rho \tfrac{\bm n}{||\bm n ||}=\rho \tfrac{[\alpha_1,\alpha_2,\alpha_3]}{\sqrt{\alpha_1^2 +\alpha_2^2+\alpha_3^2}}= \tfrac{|\tilde \epsilon_1|}{\alpha_1^2 +\alpha_2^2+\alpha_3^2}[\alpha_1,\alpha_2,\alpha_3].
 \end{align}
 
For the sphere and plane to have an intersection, it must hold that $\rho \leq 1$, i.e., $ |\tilde \epsilon_1|\leq \sqrt{\alpha_1^2 +\alpha_2^2+\alpha_3^2}$. Accordingly, we establish in the following proposition a condition to check the feasibility of localization requirement.   
 \begin{proposition}\label{prop1}
When $ \det(\mathbf{H})\geq 0 $, the localization requirement for UE at $\bm m$ is feasible if and only if the value of $\epsilon$  satisfies
 	\begin{align}\label{eq:epsrange}
 			0 \leq \epsilon \leq \tfrac{ \left(c_1-c_2\right )^2}{D_1}, 
 	\end{align}
 	where $c_1$ and $c_2$ are given by
 	\begin{align}\label{eq:c1c2}
 		c_1=\sqrt{\alpha_1^2 +\alpha_2^2+\alpha_3^2},\quad
 		c_2=\alpha_1 q_{11}+\alpha_2 q_{12}+\alpha_3 q_{13}. 
 	\end{align}
 \end{proposition}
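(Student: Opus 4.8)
The plan is to recast the feasibility question as a one-line extremal problem over the unit sphere and then read off the admissible range of $\epsilon$. Feasibility of the localization requirement for a prescribed threshold $\epsilon$ means exactly that \emph{some} UAV placement $\bm b_u$ satisfies the second-order cone constraint \eqref{eq:secondcone}. As $\bm b_u$ ranges over space, the relative direction $\bm q_u$ sweeps out the entire unit sphere $\mathbb{S}$, so feasibility is equivalent to the existence of a unit vector $\bm q_u\in\mathbb{S}$ lying in the half-space $\mathbb{H}$, i.e.\ to $\mathbb{S}\cap\mathbb{H}\neq\emptyset$.

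First I would reduce this set-intersection condition to a single scalar inequality. Writing $\bm n=[\alpha_1,\alpha_2,\alpha_3]$, membership in $\mathbb{H}$ reads $\bm n^T\bm q_u\geq\tilde\epsilon_1$, and by Cauchy--Schwarz the largest value of $\bm n^T\bm q_u$ attainable over $\bm q_u\in\mathbb{S}$ is exactly $\|\bm n\|=c_1$, attained at $\bm q_u=\bm n/\|\bm n\|$. Hence $\mathbb{S}\cap\mathbb{H}\neq\emptyset$ if and only if $\tilde\epsilon_1\leq c_1$. Substituting the definition $\tilde\epsilon_1=\sqrt{D_1\epsilon}+c_2$ converts this into $\sqrt{D_1\epsilon}\leq c_1-c_2$.

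Next I would solve for $\epsilon$. Since $\bm q_1$ is a unit vector, a second application of Cauchy--Schwarz gives $c_2=\bm n^T\bm q_1\leq\|\bm n\|=c_1$, so $c_1-c_2\geq 0$; both sides of $\sqrt{D_1\epsilon}\leq c_1-c_2$ are therefore nonnegative and may be squared without reversing the inequality, yielding $D_1\epsilon\leq(c_1-c_2)^2$. Combining this with the intrinsic constraint $\epsilon\geq 0$ (the threshold bounds the nonnegative opt-$D$ value) produces the claimed range $0\leq\epsilon\leq(c_1-c_2)^2/D_1$.

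The step I would be most careful about, and the main potential pitfall, is that the geometric narrative phrases the intersection condition two-sidedly as $\rho\leq 1$, i.e.\ $|\tilde\epsilon_1|\leq c_1$, whereas feasibility of the half-space constraint is genuinely one-sided ($\tilde\epsilon_1\leq c_1$). I would check that the apparent lower bound $\tilde\epsilon_1\geq -c_1$ imposes no extra restriction: it rearranges to $\sqrt{D_1\epsilon}\geq-(c_1+c_2)$, and since $c_2\geq -c_1$ (again Cauchy--Schwarz) the right-hand side is nonpositive while the left-hand side is nonnegative, so this bound holds automatically. This confirms that the upper bound $(c_1-c_2)^2/D_1$, together with $\epsilon\geq 0$, fully characterizes feasibility, completing the proof.
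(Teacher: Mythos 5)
Your proof is correct and follows essentially the same route as the paper's: both reduce feasibility to the condition $\tilde\epsilon_1\leq c_1$ via the geometry of the unit sphere, invoke Cauchy--Schwarz to get $|c_2|\leq c_1$, and conclude $\sqrt{D_1\epsilon}\leq c_1-c_2$. Your handling of the one-sided half-space condition (versus the paper's two-sided $\rho\leq 1$ sphere--plane intersection, whose lower bound it then discards as vacuous) is a minor but welcome refinement, not a different argument.
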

 \begin{proof}[Proof:\nopunct]
Please see the detailed proof in Appendix \ref{apx:prop1}.
\end{proof}

   \begin{figure}[ht]
	\centering
	\subfigure[{$\bm m=(250,135,10)$}]{\includegraphics[width=0.32\textwidth ]{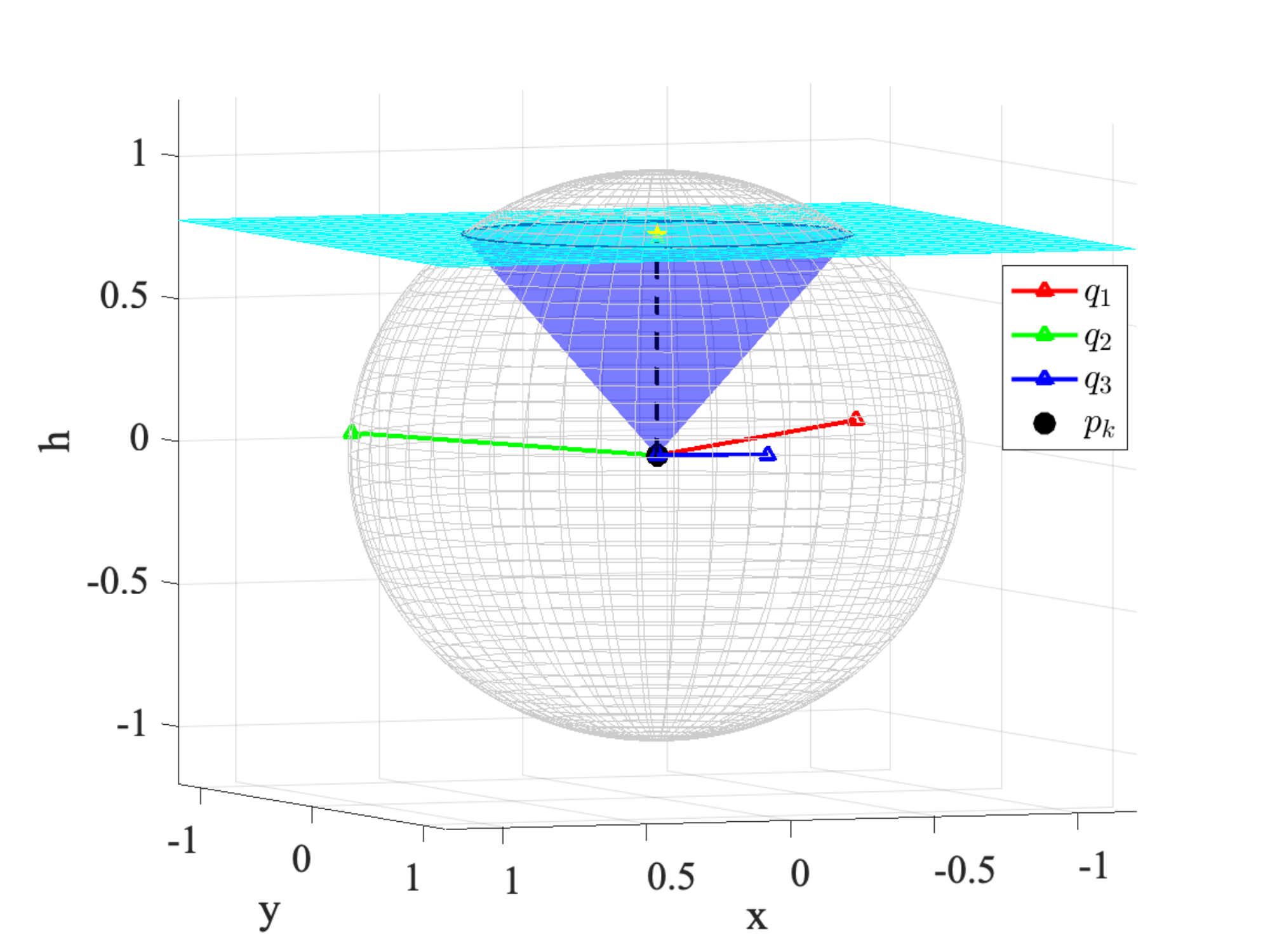}}
	\subfigure[{$\bm m=(100,50,10)$}  ]{\includegraphics[width=0.32\textwidth ]{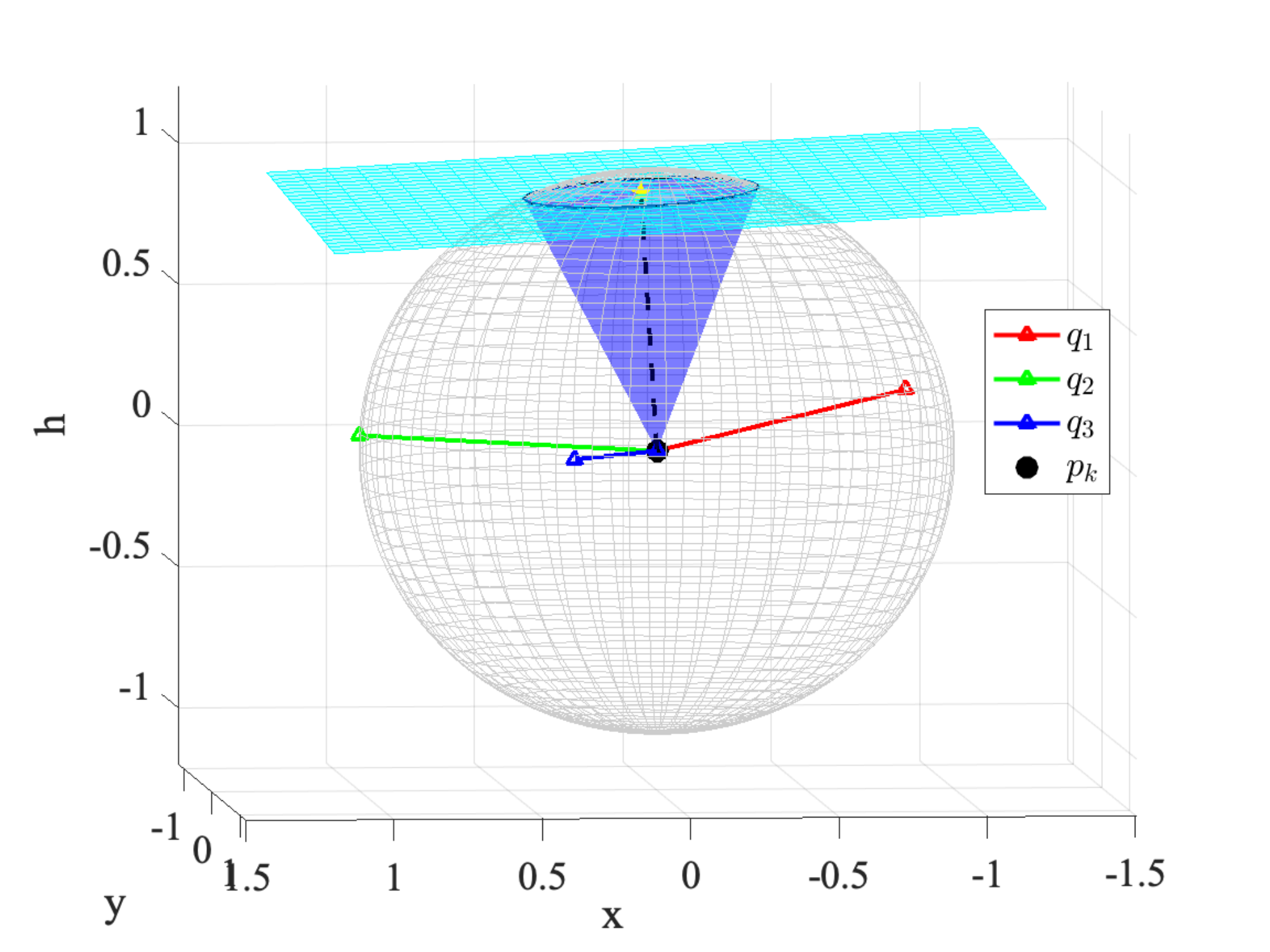}}
	\subfigure[{$\bm m=(0,10,10)$}]{\includegraphics[width=0.32\textwidth ]{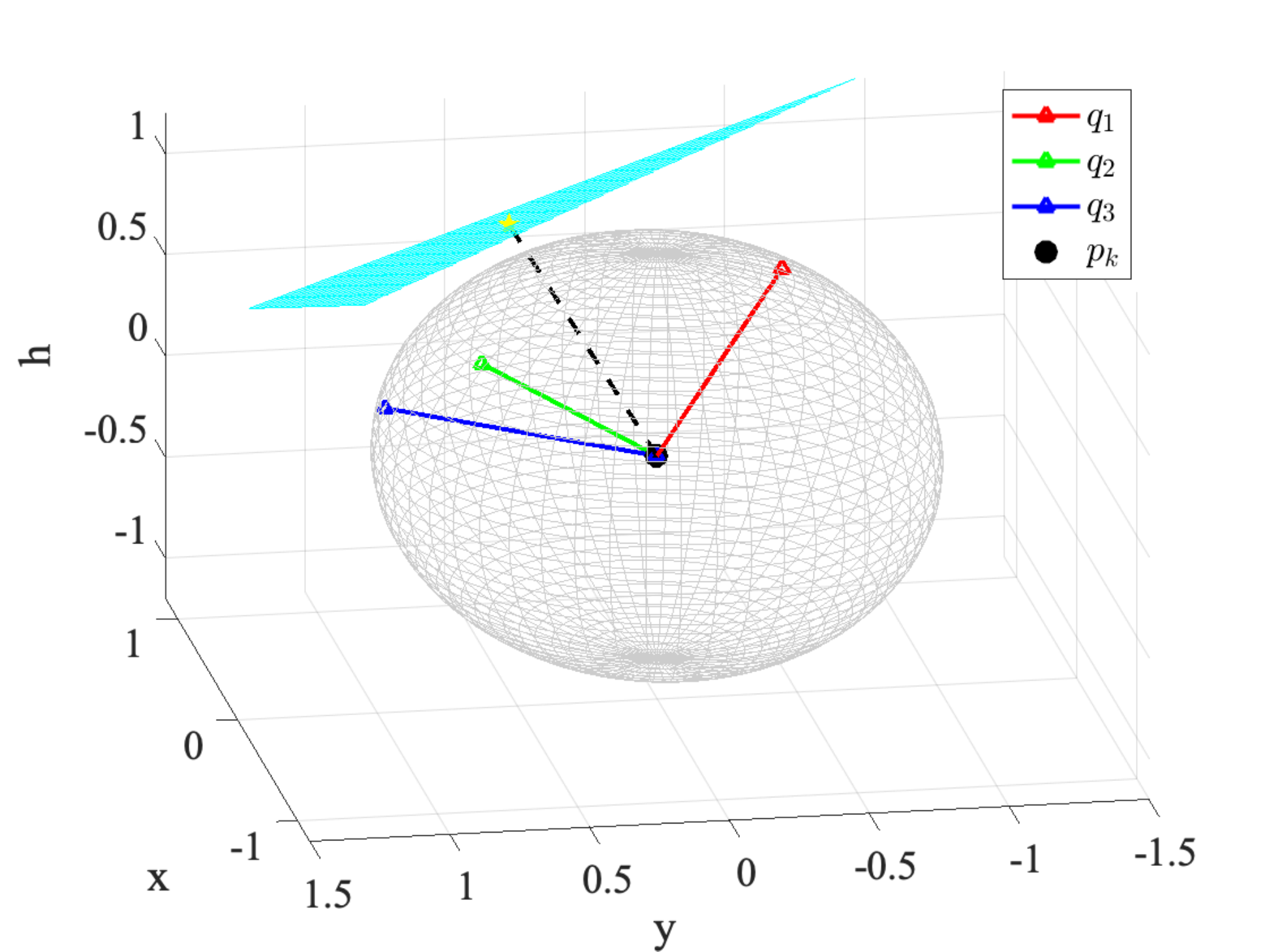}}
	\caption{Feasibility of different location points under $ \epsilon=8 \times 10^{-3}$}
	\label{fig:inter2}
\end{figure}

Proposition \ref{prop1} shows that by setting a reasonably small $\epsilon$ specified by \eqref{eq:epsrange}, we can always find a UAV location to satisfy the localization accuracy requirement. Otherwise, if $\epsilon$ is beyond the range, it could be the case that no UAV placement can achieve the required accuracy level. Fig. \ref{fig:inter2} showcases that the feasibility of different UE location points under the same settings. For UE location $(250,135,10)$, the localization requirement is feasible when $\epsilon \leq 0.12$; for UE location $(100,50,10)$, it is feasible when $\epsilon \leq 0.081$; for UE location $(0,10,10)$, it is feasible when $\epsilon \leq 7.7\times 10^{-3} $. As shown in Fig. \ref{fig:inter2}, under the same $\epsilon=8 \times 10^{-3}$,  the sphere and the plane has intersection for UE locations $(250,135,10)$ and $(100,50,10)$, which means that the localization requirement is feasible for some UAV deployment solution. In Fig. \ref{fig:inter2}(c), the sphere and the plane has no intersection, showing that no UAV deployment can satisfy the localization accuracy requirement for UE location $(0,10,10)$ when $\epsilon=8 \times 10^{-3}$. Proposition \ref{prop1}  provides a convenient method to set a proper $\epsilon$ for the constraint (\ref{eq:p1}b) to be feasible. Suppose that all the localization constraints (\ref{eq:p1}b) are feasible by Proposition \ref{prop1}. We continue to derive in the following proposition the feasible UAV hovering region for (\ref{eq:p1}b) assuming the UAV flies at a fixed altitude $h_f$.

  \begin{proposition}\label{prop2}
    For a fixed altitude $h_f$, the the localization constraint \eqref{eq:locreq} is satisfied if and only if the UAV $\bm b_u = [x^a_u, y^a_u,h_f]$ lies within  an ellipse defined by the following polynomial equation
  \begin{align}\label{eq:ellipse1}
 	\mathcal{E}^1:\quad &(\tfrac{\alpha_1^2 }{\tilde \epsilon_1^2}-1 )(x_u^a-x_0) ^2+\tfrac{2\alpha_1 \alpha_2 }{\tilde \epsilon_1^2} (x_u^a-x_0)(y_u^a-y_0)+(\tfrac{\alpha_2^2 }{\tilde \epsilon_1^2}-1 )(y_u^a-y_0)^2 \notag \\
 	&+\tfrac{2\alpha_1 \alpha_3 h_r }{\tilde \epsilon_1^2}(x_u^a-x_0) +\tfrac{2\alpha_2 \alpha_3 h_r }{\tilde \epsilon_1^2}(y_u^a-y_0)+ (\tfrac{\alpha_3^2}{\tilde \epsilon_1^2}-1 )h_r^2=0,
 \end{align}
 where $h_r=h_f-h_0$ is the relative altitude. The center $\bm c_1^e$ of the ellipse, the rotation angle $\theta_1$,  the semi-major axis $l_1^{a}$ and the semi-minor axis $l_1^{i}$ are respectively given by 
 \begin{align} \label{eq:center1}
 	\bm c_1^e=[\tfrac{\alpha_1 \alpha_3 h_r}{\tilde \epsilon_1^2-\alpha_1^2-\alpha_2^2}+x_0,\tfrac{\alpha_2 \alpha_3 h_r}{\tilde \epsilon_1^2-\alpha_1^2-\alpha_2^2}+y_0, h_f ],\quad \theta_1&=\arctan(-\tfrac{\alpha_1}{\alpha_2} ),\\
  	l_1^{a}= \sqrt{\tfrac{h_r^2(c_1^2-\tilde \epsilon_1^2 )}{\tilde \epsilon_1^2-\alpha_1^2-\alpha_2^2 }},\quad
  	l_1^{i}=\tfrac{h_r\tilde \epsilon_1 \sqrt{c_1^2-\tilde \epsilon_1^2 }}{\tilde \epsilon_1^2-\alpha_1^2-\alpha_2^2}. \label{eq:axes1}
  \end{align}
  \end{proposition}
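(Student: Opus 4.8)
The plan is to build directly on the Case~1 reduction already in hand and specialize it to a horizontal slice. Fixing $h_u^a=h_f$ and abbreviating $\tilde x=x_u^a-x_0$, $\tilde y=y_u^a-y_0$, $h_r=h_f-h_0$, the second-order cone constraint \eqref{eq:d1ineq} reads $\alpha_1\tilde x+\alpha_2\tilde y+\alpha_3 h_r\ge \tilde\epsilon_1\sqrt{\tilde x^2+\tilde y^2+h_r^2}$. On the boundary of the feasible set this holds with equality, and on the feasible side the left-hand side is forced to be nonnegative (it dominates a nonnegative multiple of a norm), so I can square without creating spurious roots. Squaring gives $(\alpha_1\tilde x+\alpha_2\tilde y+\alpha_3 h_r)^2=\tilde\epsilon_1^2(\tilde x^2+\tilde y^2+h_r^2)$, and dividing through by $\tilde\epsilon_1^2$ reproduces exactly the polynomial \eqref{eq:ellipse1}. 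This identifies the boundary curve with the conic \eqref{eq:ellipse1} and the feasible region with the side of it on which the left-hand side is large, i.e.\ the interior.

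Next I would classify the conic and verify it is a bounded ellipse. Geometrically this is transparent: by \eqref{eq:secondcone} the admissible unit directions $\bm q_u$ form a spherical cap about the axis $\bm n=[\alpha_1,\alpha_2,\alpha_3]$, so the admissible UAV positions fill a circular solid cone with apex $\bm m$ and axis $\bm n$, which the horizontal plane $h=h_f$ then slices. Algebraically, the $2\times 2$ matrix $A$ of the quadratic part has diagonal entries $\alpha_1^2-\tilde\epsilon_1^2,\ \alpha_2^2-\tilde\epsilon_1^2$ and off-diagonal entry $\alpha_1\alpha_2$, so $\det A=\tilde\epsilon_1^2(\tilde\epsilon_1^2-\alpha_1^2-\alpha_2^2)$. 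The curve is of elliptic type precisely when $\det A>0$, i.e.\ $\tilde\epsilon_1^2>\alpha_1^2+\alpha_2^2$; together with the feasibility bound $\tilde\epsilon_1^2\le c_1^2=\alpha_1^2+\alpha_2^2+\alpha_3^2$ from Proposition~\ref{prop1} and the $c_1,c_2$ in \eqref{eq:c1c2}, this says the plane's inclination lies below the cone's half-angle, so the section is a genuine bounded ellipse rather than a parabola or hyperbola.

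Finally I would read off the closed-form parameters. The center $\bm c_1^e$ is the unique stationary point of the quadratic, obtained by setting its gradient to zero; this is a $2\times 2$ linear system whose solution gives the $x,y$ coordinates in \eqref{eq:center1}, the altitude being $h_f$. The orientation follows from diagonalizing $A$: eliminating the cross term $\tilde x\tilde y$ forces a rotation by $\theta_1=\arctan(-\alpha_1/\alpha_2)$, which aligns the principal axes with the eigenvectors of $A$. Translating to $\bm c_1^e$ and rotating by $\theta_1$ puts the equation in the normal form $\lambda_+X^2+\lambda_-Y^2=\text{const}$, where $\lambda_\pm$ are the eigenvalues of $A$ and the constant is the negated value of the quadratic evaluated at the center; taking $\sqrt{\text{const}/\lambda}$ along each axis produces the semi-axis lengths $l_1^a,l_1^i$ in \eqref{eq:axes1}.

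The main obstacle is bookkeeping rather than any conceptual leap. I must (i) justify the squaring step by pinning down the correct sign of the left-hand side, so that ``within the ellipse'' coincides with the feasible region and not its complement, and (ii) confirm that the elliptic-type condition $\tilde\epsilon_1^2>\alpha_1^2+\alpha_2^2$ actually holds over the admissible range of $\epsilon$ certified by Proposition~\ref{prop1}, since this is an extra requirement beyond the mere intersection condition there. The subsequent eigen-decomposition and completion of squares are routine but algebra-heavy, and matching the resulting radical expressions to the stated $l_1^a$ and $l_1^i$ in terms of $c_1$, $\tilde\epsilon_1$ and $h_r$ is where the bulk of the computation lives.
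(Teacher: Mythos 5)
Your proof follows essentially the same route as the paper's: fix $h_u^a=h_f$, take the boundary of the cone constraint \eqref{eq:d1ineq} with equality, square to obtain the conic \eqref{eq:ellipse1}, and read off the center, rotation angle and semi-axes from the standard formulas for a general conic $Ax^2+Bxy+Cy^2+Dx+Ey+F=0$. The two points you flag as needing care --- the sign of the left-hand side before squaring and the elliptic-type condition $\tilde\epsilon_1^2>\alpha_1^2+\alpha_2^2$ --- are indeed glossed over in the paper's one-line proof (the latter is only established separately in Corollary~1), so your treatment is, if anything, slightly more complete.
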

\begin{proof}
 We substitute $h_u^a=h_f$ to let the equality in \eqref{eq:d1ineq} hold. Then, we square both sides of \eqref{eq:d1ineq} and reformulate it into \eqref{eq:ellipse1}. For an ellipse with the general form
  \begin{align}\label{eq:ge1}
      Ax^2 + Bxy + Cy^2 + Dx + Ey + F = 0,
  \end{align}
  its center $\bm c^e$, rotation angle $\theta^r$, semi-major axis $l^{a}$ and semi-minor axis $l^{i}$ are given by 
   \begin{align} 
 	\bm c^e=[\tfrac{2CD-BE}{B^2-4AC},\tfrac{2AE-BD}{B^2-4AC} ],\quad \theta^r=\arctan \left( \tfrac{C-A-\sqrt{(A-C)^2+B^2}}{B} \right),\\
  	l^{a},l^{i}=\tfrac{-\sqrt{2\left(AE^2+CD^2-BDE+(B^2-4AC)F\right) \left((A+C)\pm \sqrt{(A-C)^2+B^2}\right) }}{B^2-4AC}. \label{eq:ge3}
  \end{align}
  By substituting \eqref{eq:ellipse1} into \eqref{eq:ge1}-\eqref{eq:ge3}, we obtain \eqref{eq:center1} and \eqref{eq:axes1}.
\end{proof}

 \begin{figure}[ht]
 \centering
 	\subfigure[3D intersection of plane $h=h_f$ and cone \eqref{eq:d1ineq}]{\includegraphics[width=0.45\textwidth ]{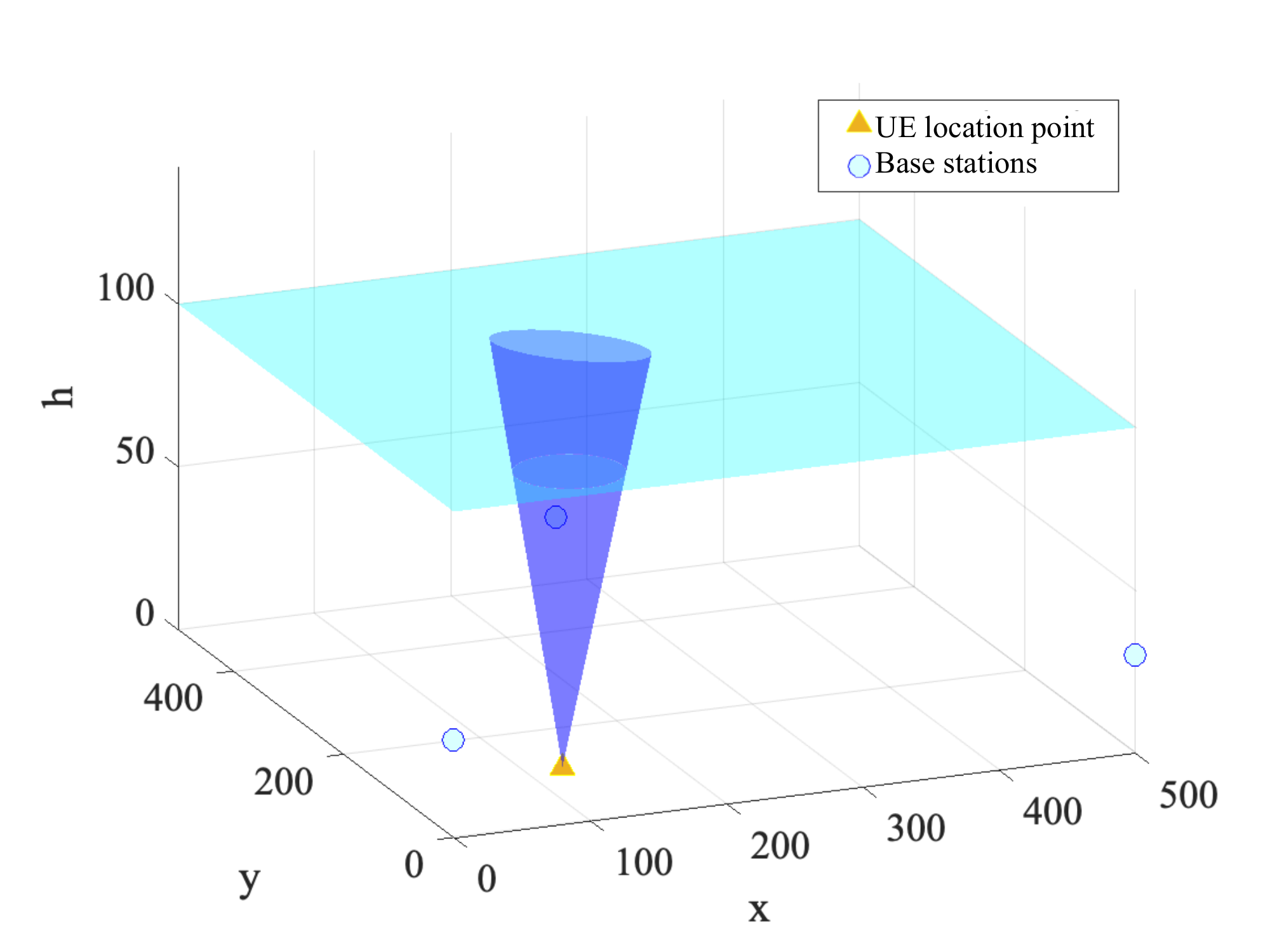}}
\subfigure[2D projection of feasible region]{\includegraphics[width=0.45\textwidth ]{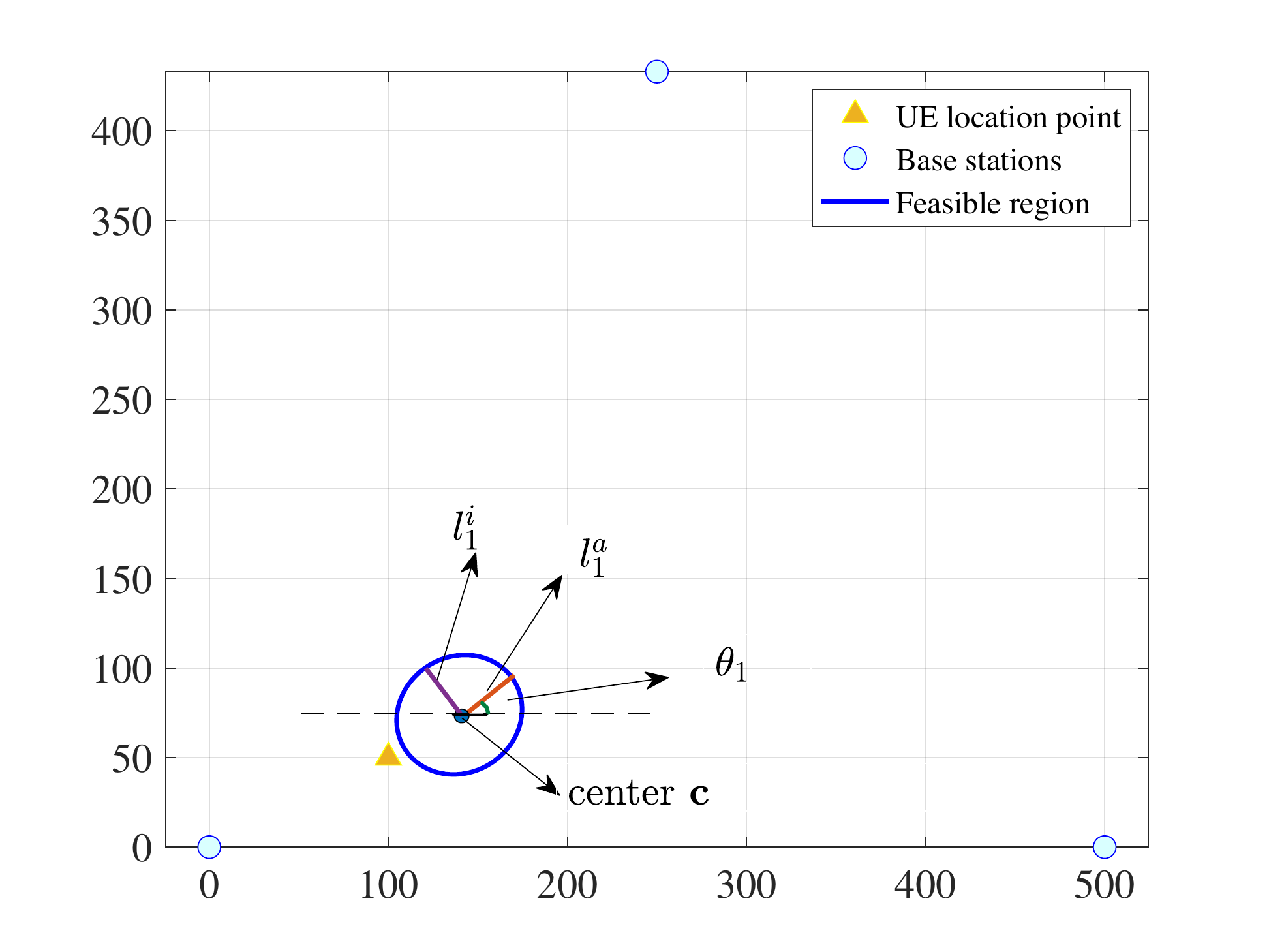}}
 	\caption{Intersections of $h_f=100$ and the extended cone when $\bm m=(100,50,10)$}
 	\label{fig:hcone}
 \end{figure} 
 
 Fig. \ref{fig:hcone}(a) shows the 3D intersection of the extended cone and the plane at $h_f=100$ m when $\bm m=(100,50,10)$. Fig. \ref{fig:hcone}(b) shows the 2D projection of the feasible UAV hovering region. We also denote the  center, the rotation angle, the semi-major axis and the semi-minor axis of the ellipse in Fig. \ref{fig:hcone}(b). In Fig. \ref{fig:hcone}(a), the UAV must be placed within the cone to satisfy the localization performance constraint. In Fig. \ref{fig:hcone}(b), given a fixed altitude, the UAV must be placed within the eclipse to meet the localization requirement. Although we derive the feasible hovering region for any given $h_f$ in Proposition \ref{prop2}, the distance between UAV and point $\bm m$ cannot be too large for two reasons. Firstly, when $h_f$ is too large,  the ellipse and its center defined in \eqref{eq:ellipse1} and \eqref{eq:center1} could deviate from the UAV restricted area $\mathcal{A}$. In this case, we cannot find a feasible UAV location in $\mathcal{A}$. Secondly, the strength of localization signal from UAV to UE becomes very weak when the distance is too large, and thus the condition for an accurate opt-$D_1$ approximation no longer holds. 
 
\subsubsection{Case 2}
  When $ \det(\mathbf{H})< 0 $, constraint \eqref{eq:locreq} is equivalent to 
\begin{align}\label{eq:case2}
 	 \alpha_1(x_u^a-x_0) +\alpha_2(y_u^a-y_0)+\alpha_3(h_u^a-h_0)   < \tilde \epsilon_2 ||\bm b_u-\bm m ||,
 \end{align}
 where $\tilde \epsilon_2=-\sqrt{D_1 \epsilon } +\alpha_1 q_{11}+\alpha_2 q_{12}+\alpha_1 q_{13}$. Similar to the first case, constraint \eqref{eq:case2} is a standard second-order cone constraint. 

 \begin{proposition}\label{prop3}
  When $\det(\mathbf{H})< 0 $, the localization requirement for UE at $\bm m$ is feasible if and only if $\epsilon$  satisfies
 	\begin{align}
 			0 \leq \epsilon \leq \tfrac{ \left(c_1+c_2\right )^2}{D_1},
 	\end{align}
 	where $c_1$ and $c_2$ are the same as in \eqref{eq:c1c2}.
 \end{proposition}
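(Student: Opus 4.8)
The plan is to reuse the spherical–intersection argument behind Proposition \ref{prop1}, but to carefully track the sign reversal that the hypothesis $\det(\mathbf{H}) < 0$ introduces. First I would divide both sides of \eqref{eq:case2} by the strictly positive quantity $||\bm b_u - \bm m||$ and substitute the unit vector $\bm q_u$ from \eqref{eq:qu}, which turns the second-order cone constraint into the normalized linear form
\begin{align}
\alpha_1 q_{u1} + \alpha_2 q_{u2} + \alpha_3 q_{u3} \le \tilde\epsilon_2.
\end{align}
This is the exact analogue of \eqref{eq:secondcone}, except that the inequality is reversed: the admissible relative directions $\bm q_u$ now form the intersection of the unit sphere $\mathbb{S}$ with the half-space $\{[x,y,h]:\alpha_1 x + \alpha_2 y + \alpha_3 h \le \tilde\epsilon_2\}$, lying on the side of the plane $\mathbb{P}$ opposite to the Case 1 region.

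Next I would determine when this intersection is nonempty. Over $\mathbb{S}$, the linear functional $\alpha_1 x + \alpha_2 y + \alpha_3 h$ ranges over $[-c_1, c_1]$, attaining its minimum $-c_1$ in the direction $-\bm n/||\bm n||$, where $c_1 = \sqrt{\alpha_1^2 + \alpha_2^2 + \alpha_3^2}$ from \eqref{eq:c1c2} is the norm of the normal vector $\bm n = [\alpha_1,\alpha_2,\alpha_3]$. Hence a point of $\mathbb{S}$ satisfying the constraint exists if and only if $-c_1 \le \tilde\epsilon_2$. This is precisely where the argument diverges from Proposition \ref{prop1}: there the feasibility test compared $\tilde\epsilon_1$ against the maximum $+c_1$, whereas the reversed inequality here forces a comparison with the minimum $-c_1$, which is what ultimately replaces $c_1 - c_2$ by $c_1 + c_2$.

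Finally I would substitute $\tilde\epsilon_2 = -\sqrt{D_1\epsilon} + c_2$ into $-c_1 \le \tilde\epsilon_2$ to obtain $\sqrt{D_1\epsilon} \le c_1 + c_2$. Since $\bm q_1$ is a unit vector, Cauchy--Schwarz gives $|c_2| = |\alpha_1 q_{11} + \alpha_2 q_{12} + \alpha_3 q_{13}| \le c_1$, so $c_1 + c_2 \ge 0$ and squaring is legitimate, yielding the claimed range $0 \le \epsilon \le (c_1 + c_2)^2 / D_1$. I would also confirm that the hypothesis is self-consistent: the derived bound gives $\det(\mathbf{H}) = \alpha_1 q_{u1} + \alpha_2 q_{u2} + \alpha_3 q_{u3} - c_2 \le -\sqrt{D_1\epsilon} \le 0$, so the feasible directions genuinely belong to the $\det(\mathbf{H}) < 0$ branch and do not spuriously overlap with Case 1. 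The only delicate point—and the main obstacle—is the direction of the square-root step: from $\det(\mathbf{H})^2 \ge D_1 \epsilon$ together with $\det(\mathbf{H}) < 0$ one must take $\det(\mathbf{H}) \le -\sqrt{D_1\epsilon}$ rather than the positive root, and it is this single sign choice that propagates through the whole argument. Everything else is a routine algebraic transcription of the Proposition \ref{prop1} proof.
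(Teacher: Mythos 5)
Your proposal is correct and follows essentially the route the paper intends: the paper omits this proof as ``similar to Proposition \ref{prop1},'' and your argument is exactly that analogue --- normalize \eqref{eq:case2} to $\bm\alpha\cdot\bm q_u \le \tilde\epsilon_2$, test nonemptiness of the sphere--half-space intersection against the extremal value $-c_1$ of the linear functional (equivalently, the $\rho\le 1$ condition of Proposition \ref{prop1} with the vacuous branch $c_2-c_1\le\sqrt{D_1\epsilon}$ discarded), and invoke Cauchy--Schwarz to get $c_1+c_2\ge 0$ before squaring. Your added consistency check that the feasible directions satisfy $\det(\mathbf{H})\le-\sqrt{D_1\epsilon}\le 0$ is a nice touch not present in the paper.
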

 \begin{proof}[Proof:\nopunct]
 The proof is similar to that of Proposition \ref{prop1}, and thus we omit it here. 
 \end{proof}
 
 \begin{proposition}\label{prop4}
  For a fixed altitude $h_f$, the the localization constraint \eqref{eq:case2} is satisfied if and only if the UAV $\bm b_u = [x^a_u, y^a_u,h_f]$ lies within  an ellipse defined by the following polynomial equation
 \begin{align}\label{eq:ellipse2}
 	\mathcal{E}^2:\quad &(\tfrac{\alpha_1^2 }{\tilde \epsilon_2^2}-1 )(x_u^a-x_0) ^2+\tfrac{2\alpha_1 \alpha_2 }{\tilde \epsilon_2^2} (x_u^a-x_0)(y_u^a-y_0)+(\tfrac{\alpha_2^2 }{\tilde \epsilon_2^2}-1 )(y_u^a-y_0)^2 \notag \\
 	&+\tfrac{2\alpha_1 \alpha_3 h_r }{\tilde \epsilon_2^2}(x_u^a-x_0) +\tfrac{2\alpha_2 \alpha_3 h_r }{\tilde \epsilon_2^2}(y_u^a-y_0)+ (\tfrac{\alpha_3^2}{\tilde \epsilon_2^2}-1 )h_r^2=0.
 \end{align}
The center $\bm c_2^e$ of the ellipse, the rotation angle $\theta_2$,  the semi-major axis $l_2^{a}$ and the semi-minor axis $l_2^{i}$ are respectively given by 
 \begin{align} \label{eq:center2}
 	\bm c_2^e=[\tfrac{\alpha_1 \alpha_3 h_r}{\tilde \epsilon_2^2-\alpha_1^2-\alpha_2^2}+x_0,\tfrac{\alpha_2 \alpha_3 h_r}{\tilde \epsilon_2^2-\alpha_1^2-\alpha_2^2}+y_0, h_f ], \quad
 	\theta_2&=\arctan(-\tfrac{\alpha_1}{\alpha_2} ),\\
  	l_2^{a}= \sqrt{\tfrac{h_r^2(c_1^2-\tilde \epsilon_2^2 )}{\tilde \epsilon_2^2-\alpha_1^2-\alpha_2^2 }},\quad
  	l_2^{i}=\tfrac{h_r\tilde \epsilon_2 \sqrt{c_1^2-\tilde \epsilon_2^2 }}{\tilde \epsilon_2^2-\alpha_1^2-\alpha_2^2}. \label{eq:axes2}
  \end{align}
  \end{proposition}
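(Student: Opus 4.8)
The plan is to follow the proof of Proposition~\ref{prop2} almost verbatim, since Case~2 differs from Case~1 only in the sign of the square-root term carried by $\tilde\epsilon_2 = -\sqrt{D_1\epsilon} + c_2$ (versus $\tilde\epsilon_1 = +\sqrt{D_1\epsilon} + c_2$), with $c_2 = \alpha_1 q_{11} + \alpha_2 q_{12} + \alpha_3 q_{13}$. First I would fix the altitude $h_u^a = h_f$, so that $h_r = h_f - h_0$ is constant, and take the boundary of the feasible set by letting \eqref{eq:case2} hold with equality. Writing $||\bm b_u - \bm m||^2 = (x_u^a - x_0)^2 + (y_u^a - y_0)^2 + h_r^2$ and squaring both sides removes the radical and leaves a quadratic in $(x_u^a, y_u^a)$. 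Collecting terms yields \eqref{eq:ellipse2}, which is identical in form to \eqref{eq:ellipse1}: squaring erases the sign of $\tilde\epsilon_2$, so the resulting locus depends on $\tilde\epsilon_2$ only through $\tilde\epsilon_2^2$.

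I would then match \eqref{eq:ellipse2} to the general second-degree form \eqref{eq:ge1}, reading off $A = \tfrac{\alpha_1^2}{\tilde\epsilon_2^2} - 1$, $B = \tfrac{2\alpha_1\alpha_2}{\tilde\epsilon_2^2}$, $C = \tfrac{\alpha_2^2}{\tilde\epsilon_2^2} - 1$, $D = \tfrac{2\alpha_1\alpha_3 h_r}{\tilde\epsilon_2^2}$, $E = \tfrac{2\alpha_2\alpha_3 h_r}{\tilde\epsilon_2^2}$, and $F = (\tfrac{\alpha_3^2}{\tilde\epsilon_2^2} - 1) h_r^2$. Substituting these into the closed-form expressions \eqref{eq:ge1}--\eqref{eq:ge3} gives \eqref{eq:center2} and \eqref{eq:axes2} directly. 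In particular, the common denominator $\tilde\epsilon_2^2 - \alpha_1^2 - \alpha_2^2$ emerges from the discriminant $B^2 - 4AC = \tfrac{4(\alpha_1^2 + \alpha_2^2 - \tilde\epsilon_2^2)}{\tilde\epsilon_2^2}$, while the factor $c_1^2 - \tilde\epsilon_2^2$ appearing under the square roots in \eqref{eq:axes2} comes from the numerator $AE^2 + CD^2 - BDE + (B^2 - 4AC)F$. Every one of these manipulations is algebraically identical to the Case-1 computation already performed for Proposition~\ref{prop2}, so no new calculation is needed beyond the symbol substitution $\tilde\epsilon_1 \mapsto \tilde\epsilon_2$.

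The step that deserves real care, and which I expect to be the main obstacle, is justifying the squaring and confirming that the locus is a genuine bounded ellipse rather than a hyperbola or degenerate conic. Squaring \eqref{eq:case2} is legitimate because $\det(\mathbf{H}) < 0$ together with $\det(\mathbf{H})^2 \geq D_1\epsilon$ forces $\det(\mathbf{H}) \leq -\sqrt{D_1\epsilon}$, equivalently $\alpha_1 q_{u1} + \alpha_2 q_{u2} + \alpha_3 q_{u3} \leq \tilde\epsilon_2$; thus the two sides of the boundary equality are sign-consistent and the squared relation describes exactly the same half-cone. A possibly negative $\tilde\epsilon_2$ only reverses which side of the cone is feasible and does not alter the boundary ellipse. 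To certify boundedness I would check $B^2 - 4AC < 0$, i.e. $\tilde\epsilon_2^2 > \alpha_1^2 + \alpha_2^2$, which makes the denominators $\tilde\epsilon_2^2 - \alpha_1^2 - \alpha_2^2$ positive, while the feasibility range $|\tilde\epsilon_2| \leq c_1$ from Proposition~\ref{prop3} ensures $c_1^2 - \tilde\epsilon_2^2 \geq 0$ so that the semi-axes in \eqref{eq:axes2} are real. Finally, evaluating the constraint at the ellipse center (or any interior test point) confirms that the UAV locations satisfying \eqref{eq:case2} are precisely the ellipse interior, completing the ``if and only if'' claim.
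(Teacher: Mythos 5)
Your proposal is correct and follows essentially the same route as the paper, which simply notes that substituting the coefficients of \eqref{eq:ellipse2} into the general conic formulas \eqref{eq:ge1}--\eqref{eq:ge3} yields \eqref{eq:center2} and \eqref{eq:axes2}, the computation being identical to that of Proposition~\ref{prop2} with $\tilde\epsilon_1$ replaced by $\tilde\epsilon_2$. Your additional remarks on the sign-consistency of the squaring step and on the ellipse condition $\tilde\epsilon_2^2>\alpha_1^2+\alpha_2^2$ are sound and correspond to what the paper defers to Corollary~1.
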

 \begin{proof}
 By substituting \eqref{eq:ellipse2} into \eqref{eq:ge1}-\eqref{eq:ge3}, we obtain \eqref{eq:center2} and \eqref{eq:axes2}.
 \end{proof}
 
In general, the overall feasible hovering region of a UAV can be specified by the union of two eclipses in Proposition \ref{prop2} and \ref{prop4}, or just one of them. To simplify the algorithm design to solve $\mathcal{P}1$ in the next section, we prove in the following corollary that we can set a proper value $\epsilon$ for each UE  such that the feasible hovering region of the UAV is specified by only one of the two eclipses in \eqref{eq:ellipse1} and \eqref{eq:ellipse2}.
 
\begin{corollary} By setting  $\epsilon$ such that $\left|c_3-|c_2|\right|\leq \sqrt{D_1\epsilon} \leq c_3+|c_2|$ with $c_3=\sqrt{\alpha_1^2+\alpha_2^2}$, the feasible hovering region of the UAV is $\mathcal{E}^1 $ in \eqref{eq:ellipse1} if $c_2\geq 0$. Otherwise, it is $\mathcal{E}^2$ in \eqref{eq:ellipse2} if $c_2<0$.
  \end{corollary}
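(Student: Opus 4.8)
The plan is to collapse both localization sub-cases onto a single scalar. Defining $f \triangleq \alpha_1 q_{u1}+\alpha_2 q_{u2}+\alpha_3 q_{u3}=\bm\alpha\cdot\bm q_u$ with $\bm\alpha=[\alpha_1,\alpha_2,\alpha_3]$, equation \eqref{eq:detH} gives $\det(\mathbf H)=f-c_2$, so the localization requirement \eqref{eq:locreq}, i.e. $|\det(\mathbf H)|\ge\sqrt{D_1\epsilon}$, becomes $|f-c_2|\ge s$ with $s\triangleq\sqrt{D_1\epsilon}$. This splits exactly into the two regimes already treated: $\det(\mathbf H)\ge0$ gives $f\ge c_2+s=\tilde\epsilon_1$ (Case 1, boundary $\mathcal E^1$) and $\det(\mathbf H)<0$ gives $f\le c_2-s=\tilde\epsilon_2$ (Case 2, boundary $\mathcal E^2$). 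The full feasible hovering set on $h=h_f$ is therefore $\{f\ge\tilde\epsilon_1\}\cup\{f\le\tilde\epsilon_2\}$, and the proof reduces to deciding, for each threshold, whether the corresponding conic is a bounded ellipse or an unbounded hyperbola.

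The key computation is the conic discriminant of \eqref{eq:ellipse1}. Reading off $A=\alpha_1^2/\tilde\epsilon_1^2-1$, $B=2\alpha_1\alpha_2/\tilde\epsilon_1^2$, $C=\alpha_2^2/\tilde\epsilon_1^2-1$ from the general form \eqref{eq:ge1} and using $c_3^2=\alpha_1^2+\alpha_2^2$, I would simplify $B^2-4AC=4(c_3^2-\tilde\epsilon_1^2)/\tilde\epsilon_1^2$; the same expression holds for $\mathcal E^2$ with $\tilde\epsilon_2$. Hence $\mathcal E^i$ is of elliptic type exactly when $|\tilde\epsilon_i|>c_3$ and of hyperbolic type when $|\tilde\epsilon_i|<c_3$. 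This matches the geometry: as $\|\bm b_u-\bm m\|\to\infty$ the value $f$ tends to a purely horizontal inner product lying in $[-c_3,c_3]$, so the super-level set $\{f\ge\tilde\epsilon_1\}$ is a bounded ellipse interior iff $\tilde\epsilon_1>c_3$ and the sub-level set $\{f\le\tilde\epsilon_2\}$ is a bounded ellipse interior iff $\tilde\epsilon_2<-c_3$; otherwise the admissible side reaches into the far field and is unbounded. The reality of the semi-axes in \eqref{eq:axes1} pins the nondegenerate range further to $c_3<|\tilde\epsilon_i|<c_1$, the upper end being the feasibility envelope $\tilde\epsilon_1\le c_1$ of Proposition \ref{prop1}.

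What remains is elementary interval bookkeeping in $s$. With $\tilde\epsilon_1=c_2+s$ and $\tilde\epsilon_2=c_2-s$, take first $c_2\ge0$, so $\tilde\epsilon_1\ge0$. Then $|\tilde\epsilon_1|\ge c_3\Leftrightarrow s\ge c_3-c_2$, while $|\tilde\epsilon_2|\le c_3\Leftrightarrow|c_2-s|\le c_3\Leftrightarrow c_2-c_3\le s\le c_2+c_3$. Intersecting these yields precisely $\max(c_3-c_2,\,c_2-c_3)\le s\le c_3+c_2$, i.e. $|c_3-c_2|\le\sqrt{D_1\epsilon}\le c_3+c_2$, which is the stated band for $c_2\ge0$. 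The case $c_2<0$ is the mirror image under $\bm\alpha\mapsto-\bm\alpha$ (which sends $c_2\mapsto-c_2$, leaves $c_3$ fixed, and swaps $\tilde\epsilon_1\leftrightarrow-\tilde\epsilon_2$), giving $|\tilde\epsilon_2|\ge c_3$ and $|\tilde\epsilon_1|\le c_3$ under the same band $|c_3-|c_2||\le s\le c_3+|c_2|$. Thus the band forces exactly one conic to be elliptic — $\mathcal E^1$ when $c_2\ge0$ and $\mathcal E^2$ when $c_2<0$ — and the other to be hyperbolic. The hyperbolic threshold yields an unbounded admissible region that, by the discussion following Proposition \ref{prop2}, lies outside the restricted area $\mathcal A$ and in the low-SNR far field where the opt-$D_1$ approximation of Lemma 1 fails; discarding it leaves the single bounded ellipse as the usable feasible hovering region, which is the claim.

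I expect the main obstacle to be the sign bookkeeping rather than any hard algebra: one must simultaneously track the sign of $\det(\mathbf H)$ (which names the conic $\mathcal E^1$ versus $\mathcal E^2$), the sign of $c_2$ (which decides which conic survives as an ellipse), and the sign of $\tilde\epsilon_i$ when stripping the absolute values and distinguishing a bounded interior from an unbounded exterior of an elliptic level set. The one non-mechanical point is justifying that the hyperbolic branch may be dropped; rather than claim that region is literally empty, I would rest this on the near-field / restricted-area validity argument already stated in the paper, together with the feasibility bounds $|\tilde\epsilon_i|\le c_1$ of Propositions \ref{prop1} and \ref{prop3}, which guarantee the retained ellipse is nondegenerate.
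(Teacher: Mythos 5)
Your proof is correct and follows essentially the same route as the paper's: apply the conic discriminant condition $B^2-4AC<0$ to \eqref{eq:ellipse1} and \eqref{eq:ellipse2}, reduce it to $|\tilde\epsilon_i|>c_3$, and then carry out the interval bookkeeping in $\sqrt{D_1\epsilon}$ that produces the stated band and selects $\mathcal{E}^1$ when $c_2\geq 0$ and $\mathcal{E}^2$ when $c_2<0$. The only substantive difference is that you explicitly justify discarding the non-elliptic (unbounded) branch of the feasible set via the far-field/restricted-area argument, a point the paper's own proof leaves implicit.
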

 
 \begin{proof}[Proof:\nopunct]
 Please see the detailed proof in Appendix \ref{apx:col1}.
 \end{proof}

 \section{Proposed Solution} \label{sec:solution}
 
 By replacing the opt-$D$ value with opt-$D_1$, we transfer $\mathcal P1$ into the following optimization problem
\begin{subequations}\label{eq:p2}
\begin{align}
	\mathcal{P}2: \quad& \min_{\{\bm b_u \}_{u\in \mathcal U} } \quad |\mathcal U |,\\
	\text{s.t.}\quad & \max_{\{\bm b_u \}_{u\in \mathcal U}} ~  \text{opt-}D_1(\bm m_k,\bm b_u ) \geq \epsilon, \forall k \in \mathcal K,\\
	& \max_{n\in \mathcal {U}\cup\mathcal{N} }~ R_{kn} \geq R_k^{th}, \forall k \in \mathcal K,\\
	& \bm b_u \in \mathcal A, \forall u\in \mathcal U. 
\end{align}
\end{subequations}

In this section, we propose a low-complexity algorithm to solve $\mathcal P2$. In addition to the feasible localization regions specified by Corollary 1, we also need to determine the feasible UAV region to meet communication requirements. According to the communication rate of G2A channel in \eqref{eq:g2arate}, the communication requirement for UEs is equivalent to
\begin{align}\label{eq:commre1}
 ||\bm b_u-\bm m_k ||^\alpha	\leq \tfrac{P_k \hat{\mathbb{P}}( LoS,\theta_{ku})}{W_{ku}N_0\gamma_0 (2^{R_k^{th}/W_{ku}}-1 )}.
\end{align}

Because $\hat{\mathbb{P}}( LoS,\theta_{ku})$ is also a function of $||\bm b_u-\bm m_k ||$, it is difficult to depict \eqref{eq:commre1} in graph. To efficiently characterize the feasible region, we restrict the communication requirement by finding a $\hat \theta_{ku}$ such that it is irrelevant to $||\bm b_u-\bm m_k ||$ and satisfies $ \hat{\mathbb{P}}( LoS,\hat \theta_{ku})\leq \hat{\mathbb{P}}( LoS,\theta_{ku})$. 

\begin{proposition}\label{prop5}
Let 
\begin{align} \label{eq:hattheta}
	\hat \theta_{ku} = \tfrac{180}{\pi}\arcsin \left(|h_f-h_k| \big(\tfrac{  W_{ku}N_0\gamma_0 (2^{R_k^{th}/W_{ku}}-1 )  }{P_k} \big)^\frac{1}{\alpha} \right), 
\end{align}
it holds that $\hat{\mathbb{P}}( LoS,\hat \theta_{ku})\leq \hat{\mathbb{P}}( LoS,\theta_{ku}), \forall \theta_{ku} \in [0^\circ,90^\circ]$.
\end{proposition}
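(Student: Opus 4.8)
The plan is to reduce the claimed probability inequality to a monotonicity comparison and then to a one-line comparison of link distances. First I would unfold the definition of $\hat{\mathbb{P}}(LoS,\theta)$ from \eqref{eq:gain}, namely $\hat{\mathbb{P}}(LoS,\theta)=\kappa+(1-\kappa)\mathbb{P}(LoS,\theta)$. Since $\kappa<1$, this is an increasing affine transformation of the sigmoid $\mathbb{P}(LoS,\theta)$, which is itself strictly increasing in $\theta$ (with the standard sign $e_2>0$ the exponent $-e_2(\theta-e_1)$ decreases as $\theta$ grows, so the sigmoid's denominator shrinks). Hence $\hat{\mathbb{P}}(LoS,\cdot)$ is increasing on $[0^\circ,90^\circ]$, so the target inequality $\hat{\mathbb{P}}(LoS,\hat\theta_{ku})\le\hat{\mathbb{P}}(LoS,\theta_{ku})$ is equivalent to the angle inequality $\hat\theta_{ku}\le\theta_{ku}$. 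I would also record the ceiling bound $\hat{\mathbb{P}}(LoS,\theta)\le 1$, which follows immediately from $\mathbb{P}\le 1$ and $\hat{\mathbb{P}}=\kappa+(1-\kappa)\mathbb{P}$.

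Next I would translate the angle comparison into the distance domain. Writing $d=\|\bm b_u-\bm m_k\|$ and abbreviating the communication constant as $C=\tfrac{W_{ku}N_0\gamma_0(2^{R_k^{th}/W_{ku}}-1)}{P_k}$, the elevation angle at the fixed altitude $h_u^a=h_f$ satisfies $\sin(\tfrac{\pi}{180}\theta_{ku})=\tfrac{|h_f-h_k|}{d}$, while \eqref{eq:hattheta} gives $\sin(\tfrac{\pi}{180}\hat\theta_{ku})=|h_f-h_k|\,C^{1/\alpha}$. Because $\sin$ and $\arcsin$ are increasing on the relevant ranges, $\hat\theta_{ku}\le\theta_{ku}$ holds if and only if $|h_f-h_k|\,C^{1/\alpha}\le\tfrac{|h_f-h_k|}{d}$, i.e. if and only if $d^\alpha\le C^{-1}$. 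I would interpret $d_0\triangleq C^{-1/\alpha}$ as exactly the largest admissible link distance obtained by inserting the ceiling value $\hat{\mathbb{P}}=1$ into the communication requirement \eqref{eq:commre1}; equivalently, $\hat\theta_{ku}$ is the elevation angle seen at distance $d_0$.

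The remaining step is to certify that every UAV position of interest obeys $d\le d_0$, and this is where the ceiling bound enters. Any position meeting the true communication requirement \eqref{eq:commre1} has $d^\alpha\le\tfrac{P_k\hat{\mathbb{P}}(LoS,\theta_{ku})}{W_{ku}N_0\gamma_0(2^{R_k^{th}/W_{ku}}-1)}=d_0^\alpha\,\hat{\mathbb{P}}(LoS,\theta_{ku})\le d_0^\alpha$, so $d\le d_0$ holds automatically, giving $\theta_{ku}\ge\hat\theta_{ku}$ and hence $\hat{\mathbb{P}}(LoS,\hat\theta_{ku})\le\hat{\mathbb{P}}(LoS,\theta_{ku})$ by monotonicity. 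Read contrapositively, any position with $\theta_{ku}<\hat\theta_{ku}$ has $d>d_0$ and therefore $d^\alpha>d_0^\alpha\ge d_0^\alpha\,\hat{\mathbb{P}}(LoS,\theta_{ku})$, violating \eqref{eq:commre1}; thus the elevation angles that can ever arise for a UE served under the rate threshold all lie in $[\hat\theta_{ku},90^\circ]$, which is the effective range over which the stated inequality is asserted.

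I expect the main obstacle to be the careful handling of this domain rather than any hard computation. In particular one must verify that $\hat\theta_{ku}$ is well defined, i.e. that the $\arcsin$ argument $|h_f-h_k|\,C^{1/\alpha}=|h_f-h_k|/d_0$ does not exceed $1$; this is precisely the feasibility condition $|h_f-h_k|\le d_0$ guaranteeing that some UAV at altitude $h_f$ can meet the rate threshold at all. Granting this, the entire argument rests on just the monotonicity of $\hat{\mathbb{P}}$ (needing $e_2>0$) and the single ceiling inequality $\hat{\mathbb{P}}\le 1$, so no further estimation is required.
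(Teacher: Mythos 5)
Your proposal is correct and follows essentially the same route as the paper's proof: use the ceiling $\hat{\mathbb{P}}(LoS,\theta_{ku})\leq 1$ to bound the link distance by $R_0=\big(\tfrac{P_k}{W_{ku}N_0\gamma_0(2^{R_k^{th}/W_{ku}}-1)}\big)^{1/\alpha}$ under the communication constraint, deduce $\sin\theta_{ku}\geq\sin\hat\theta_{ku}$, and conclude by monotonicity of $\hat{\mathbb{P}}(LoS,\cdot)$. Your additional remarks on the effective domain of $\theta_{ku}$ and the well-definedness of the $\arcsin$ argument are sound clarifications of points the paper leaves implicit, but they do not change the argument.
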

\begin{proof}[Proof:\nopunct] Please see the detailed proof in Appendix \ref{apx:prop5}.
\end{proof}

Then, the restricted communication requirement of UE $k$ in G2A channels becomes
\begin{align} \label{eq:circle}
 ||\bm b_u-\bm m_k ||\leq \left(\tfrac{P_k\hat{\mathbb{P}}( LoS, \hat \theta_{ku})}{W_{ku}N_0\gamma_0 (2^{R_k^{th}/W_{ku}}-1 )}  \right)^{\frac{1}{\alpha}}.
\end{align}
Notice that \eqref{eq:commre1} is automatically satisfied if \eqref{eq:circle} holds. Recall that the communication rate of G2G channel is $R_{kn}$ in \eqref{eq:g2grate}. Then, BS $n$ can meet the communication requirement of UE $k$ if its location satisfies 
\begin{align} \label{eq:bscircle}
	||\bm b_n- \bm m_k ||\leq \left ( \tfrac{F^{-1}(\varepsilon  )P_k}{W_{kn}N_0\gamma_0 (2^{R_k^{th}/W_{ku}}-1 )} \right )^{\frac{1}{\beta}}.
\end{align}

Given a fixed altitude $h_f$, the feasible region of $\text{opt-}D_1(\bm m_k,\bm b_u ) \geq \epsilon$ is an ellipse denoted by $\mathcal E_k$ and the feasible region of the communication requirement is a circle denoted by $\mathcal C_k$. We define a  region set $\mathcal{R}$ as $\mathcal{R}=\bigcup_{k=1}^K  (\mathcal C_k\cup \mathcal E_k) $, which includes all feasible regions of the communication and localization requirements. Firstly, we consider the communication requirement of each UE. If the location of UE $k$ satisfies \eqref{eq:bscircle}, BS $n$ can already serve UE $k$'s communication demand. We denote the UEs served by BSs as $\mathcal K_{BS}$ and their corresponding circles as $\mathcal C_{BS}= \bigcup_{k \in \mathcal K_{BS}}  \mathcal C_k $. Then, when considering the placement of UAVs, we can exclude  $\mathcal C_{BS}$ from $\mathcal R$. Therefore, the new set becomes  $\mathcal R_{new}= \mathcal R \backslash \mathcal C_{BS}$.  Our goal is to select a minimum-size subset $ \mathcal M= \{\bm b_u \}_{u\in \mathcal U}$ of points such that every region in $ \mathcal {R}_{new} $ has nonempty intersection with $\mathcal M$,. We recognize this as a classical minimum hitting set (MHS) problem, and summarize the solution algorithm to UAV deployment in Algorithm 1.

\begin{algorithm}[ht] \small
 \caption{Solution algorithm to UAV deployment}
    \begin{algorithmic}[1]
    \STATE \textbf{Input:} The UE locations $\{\bm m _k\}_{k=1}^K$; the fixed UAV altitude $h_f$;
      \STATE \textbf{Output:} The required number of UAVs $|\mathcal U |$; the UAV locations   $\mathcal M= \{\bm b_u \}_{u\in \mathcal U}$
         \STATE \textbf{Initialization:} transmission power, bandwidth, $R_k^{th}$, $\epsilon_k$, $\mathcal K_{BS}=\varnothing $,$\mathcal R=\varnothing $;
         \STATE Use Proposition \ref{prop1} and \ref{prop3} to check feasibility; otherwise, use Proposition \ref{prop5} to adjust the values of $\epsilon_k$;
       \FOR{$k=1$ to $K$}
         \STATE Obtain $\mathcal C_k$ according to \eqref{eq:circle} and determine $\mathcal E_k$ using \eqref{eq:ellipse1} or \eqref{eq:ellipse2};
         \STATE $\mathcal R=\mathcal R\cup \mathcal C_k \cup \mathcal E_k$;
         	\IF{$\bm m_k$ satisfies \eqref{eq:bscircle} }
						\STATE $\mathcal K_{BS}=\mathcal K_{BS} \cup \{ k\} $;
					 $\mathcal C_{BS}=\mathcal C_{BS} \cup \{ \mathcal C_k\} $;
				\ENDIF
         \ENDFOR	
         \STATE Obtain $\mathcal R_{new}= \mathcal R \backslash \mathcal C_{BS}$;
		\STATE Obtain $|\mathcal U |,\mathcal M$ by solving the MHS problem for $\mathcal R_{new}$;
			\STATE \textbf{Return:} $ |\mathcal U |,\mathcal M$    \end{algorithmic}
\end{algorithm}

 The MHS problem can be equivalently transferred into an integer liner programming (ILP) problem. The restricted area $\mathcal A$ is first discretized into a set of grid points $\mathcal{G}= \{\bm g_1, \bm g_2,...,\bm g_L \}$, which are candidates for the deployment of UAVs. Given the candidate UAV points and the $\mathcal R_{new}$, we use a binary variable $v_{ij}=1$ to denote that a grid point $\bm g_i$ is in set element $S_j \in \mathcal R_{new}$, where $S_j$ could be a circle or an ellipse region. Otherwise,  we set $v_{ij}$ to be zero. Then, the ILP problem is formulated as
	\begin{subequations}\label{eq:ilp}
	\begin{align}
	\mathcal{P}3: \quad 	\min_{v_{ij}} \quad&  \sum_{i=1}^L\sum_{j=1}^{|\mathcal R_{new}|} v_{ij} \\
		\text{s.t.} \quad &  \sum_{i=1}^L v_{ij} \geq 1, \forall S_j\in  \mathcal R_{new}.
	\end{align} 
	\end{subequations}
	
	We explain the ILP problem using a simple case with two UEs as shown in Fig. \ref{fig:example}. The red circles are the feasible communication regions and the blue ellipses are the feasible localization regions. Therefore, the region set is $\mathcal R_{new}=\{ \mathcal{C}_1,\mathcal{C}_2,\mathcal{E}_1,\mathcal{E}_2 \}$.  The small blue triangles are the candidate points $\mathcal{G}= \{\bm g_1, \bm g_2,...,\bm g_6 \}$ to deploy the UAVs.\footnote{The candidate points are randomly selected in space for illustration. In practice, the candidate points could be predetermined or calculated based on environment conditions and parameters.} The ILP problem becomes
		\begin{subequations}
	\begin{align}
   	\min_{v_{ij}} \quad&  \sum_{i=1}^6\sum_{j=1}^4 v_{ij} \\
		\text{s.t.} \quad &  \sum_{i=1}^6 v_{ij} \geq 1, \forall S_j\in  \{ \mathcal{C}_1,\mathcal{C}_2,\mathcal{E}_1,\mathcal{E}_2 \},
	\end{align} 
	\end{subequations}
and	the optimal solution is to deploy two UAVs at $\bm g_1$ and $\bm g_4$. By doing so, all the four circles/eclipses are hit by the two points.

\begin{figure}[ht]
\centering
	{\includegraphics[width=0.32\textwidth ]{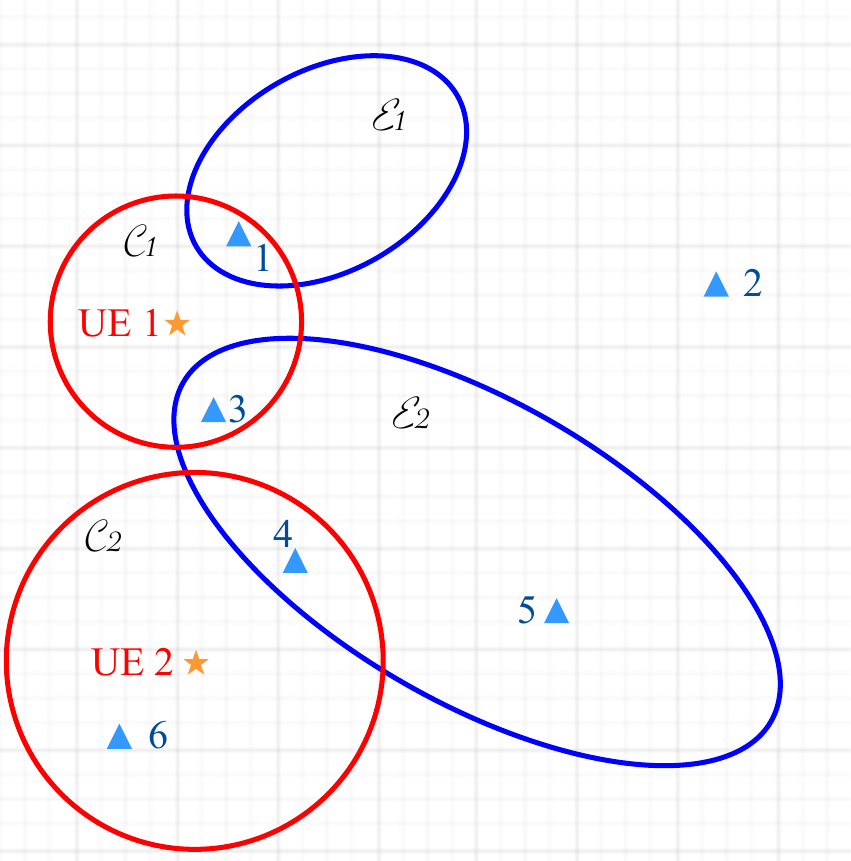}}
	\caption{Explanation with two UEs}
	\label{fig:example}
\end{figure}

	The ILP problem is NP-hard and requires high computational complexity to obtain the optimal solution. Thus, we propose a Depth-first (DF) algorithm to solve the minimum hitting set problem.  Firstly, we define the depth of point $\bm g_l$ as $\Delta_l$, which is equal to the number of regions containing point $\bm g_l$, i.e., $|\mathcal R_l |$. Then, we select the grid point with the largest depth as the first UAV location. We remove the regions containing $\bm g_l$ and update the depth of grid points. The procedure is repeated until all regions are covered by at least one UAV.  We use Fig. \ref{fig:example} to illustrate the algorithm. To start with, we observe that $\bm g_1$, $\bm g_3$ and $\bm g_4$ have the largest depth 2. Suppose that we firstly select $\bm g_1$ in $\mathcal M$. Since $\bm g_1$ is in $\mathcal C_1$ and $\mathcal E_1$, we remove these two regions from $\mathcal R_{new}$ and update the depth of grid points. Then, we see that $\bm g_4$ has the largest depth as it is inside the regions $\mathcal C_2$ and $\mathcal E_2$. Therefore, we find out that the solution is to deploy UAVs at $\bm g_1$ and $\bm g_4$. Nonetheless, the proposed solution may lead to suboptimal solution, e.g., selecting $\bm g_3$ in the first step would lead to deploy 3 UAVs at $\bm g_1$, $\bm g_3$ and $\bm g_4$. Still, we can improve the performance by executing the algorithm multiple times using randomized selection method and selecting the one produces the best solution.   
	
	The computation complexity of Algorithm 2 is $\mathcal{O}(LK+K^2)$, where $L$ is the number of grid points and $K$ is the number of UEs. The complexity can be further reduced by removing points not in any regions, i.e., $\Delta_l=0$, and the number of candidate points would be significantly deducted. The DF algorithm is summarized in Algorithm 2.

\begin{algorithm}[ht] \small
 \caption{Depth-first algorithm}
    \begin{algorithmic}[1] 
    \STATE \textbf{Input:} $\mathcal R_{new}$,  grid points $\mathcal{G}= \{\bm g_1, \bm g_2,...,\bm g_L  \}$ ;
    \STATE \textbf{Output:} $|\mathcal U |,\mathcal M $;
    \STATE \textbf{Initialization:} $\mathcal M= \varnothing$; 
    	\REPEAT 
         \FOR{$l=1$ to $L$}
         \STATE Find all regions in $\mathcal R_{new}$ containing point $\bm g_l$ and store them into $\mathcal R_l $;
         \STATE The depth of point $ \bm g_l$ is $\Delta_l=|\mathcal R_l | $;
         \ENDFOR	
         
         \STATE Find one $\bm g_l$ with the largest depth $ \Delta_l$;
         \STATE $\mathcal M= \mathcal M \cup \{\bm g_l \} $;
         \STATE Remove the regions containing $\bm g_l$ from $\mathcal R_{new}$, $\mathcal R_{new}=\mathcal R_{new}\backslash\{S_i|S_i\ni \bm g_l,\forall S_i \in  \mathcal R_{new} \} $;
       	\UNTIL $\mathcal R_{new}= \varnothing $
  		\STATE $|\mathcal U| =|\mathcal M|$;
			\STATE \textbf{Return:} $|\mathcal U |,\mathcal M$;
			   \end{algorithmic}
\end{algorithm}

\section{Simulation Results} \label{sec:simulation}

\begin{table}[ht] 
\scriptsize
\centering
\caption{Simulation Parameters}
 \begin{tabular}{||c|c||c|c||} 
 \hline\hline
 \textbf{System Parameters} & \textbf{Value} &\textbf{System Parameters} & \textbf{Value}\\ [0.5ex] 
 \hline\hline
 Main frequency, $f_c$ & 2.1 GHz  & 
Free space path loss at 1 m, $\gamma_0$ & 38.89 dB   \\
Noise power spectral density, $N_0$ & -174 dbm/Hz  &
Reference signal bandwidth, $W$ & 180 kHz  \\
G2A path loss exponent, $\alpha$ & 2 &
G2G path loss exponent, $\beta$ & 2.2 \\
Transmission power of the BSs, $P_n$ & 1 W&
Transmission power of the UAV, $P_u$ &  1 W\\
Transmission power of the UEs, $P_k$ & 0.01 W&
UE uplink transmission bandwidth, $W_{ku},W_{kn}$ &  180 kHz\\
Maximum tolerable outage probability $\varepsilon$ & 0.1&
Inverse of the CDF, $F^{-1}(\varepsilon)$ & 0.11  \\
Environmental parameter, $a$ & 15 &  
	Environmental parameter, $b$ & 0.5 \\
	Symbol duration of the OFDM signal, $T_s$ & $\tfrac{1}{1.5\times 10^4}$ s &  
	Number of subframes containing NPRS & 160  \\
	Location of BS 1, $\bm b_1$ & $[100,100,30]$ &  
	Location of BS 2, $\bm b_2$ & $[250,433,30]$ \\
	Location of BS 3, $\bm b_3$ & $[500,250,30]$ &  
	 &  \\
 [0.5ex] 
 \hline
 \end{tabular}
\end{table}

In this section, we conduct simulation experiments to verify our analysis and evaluate the performance of our proposed algorithm. The simulation parameters are summarized in Table I unless otherwise stated. We compare our proposed algorithm with the following benchmark methods:
 \subsubsection{\textbf{Integer Linear Programming (ILP) Method}} 
	The ILP problem $\mathcal P3$ is NP-hard and requires high computational complexity to obtain the optimal solution. We use the solution of ILP method as baseline to evaluate the performance of our proposed method.   
	
  \subsubsection{\textbf{Communication First Method}} In this method, we firstly deploy the UAVs the satisfy the communication requirements of all UEs. The deployed UAVs can satisfy the localization requirements of some UEs. Afterwards, we deploy additional UAVs to meet the localization demands of the remaining UEs.
  \subsubsection{\textbf{Spiral Searching Method}} In \cite{lyu2016placement}, the authors proposed a polynomial-time algorithm to sequentially deploy the UAVs along a spiral path toward the center to provide wireless communication coverage to ground users. Similarly, we deploy the UAVs to satisfy the communication and localization requirements of all UEs along a simplified spiral path shown in Fig. \ref{fig:heuristic}(a).  
  \subsubsection{\textbf{Strip Searching Method}} In \cite{agarwal2012near}, the authors proposed a near-linear-time approximation algorithm for the hitting set problem under specific geometric settings. The main idea is to use vertical decomposition to obtain overlapping regions. We adopt the strip searching as a benchmark method, which deploys the UAVs in the sequence shown in Fig. \ref{fig:heuristic}(b) from left to right.
	
	\begin{figure}[ht]
	\centering
	\subfigure[Spiral searching method]{\includegraphics[width=0.45\textwidth ]{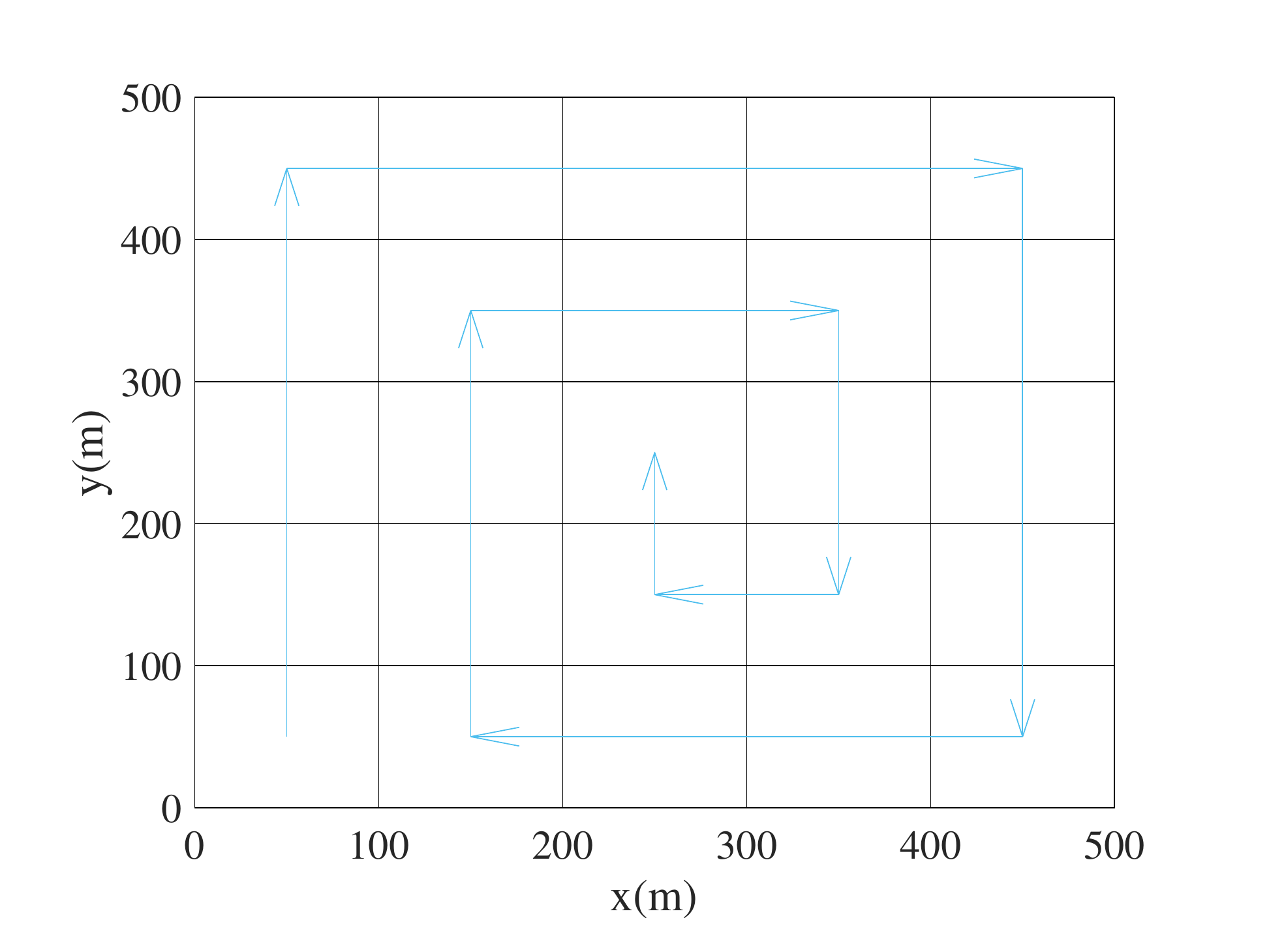}}
	\subfigure[Strip searching method]{\includegraphics[width=0.45\textwidth ]{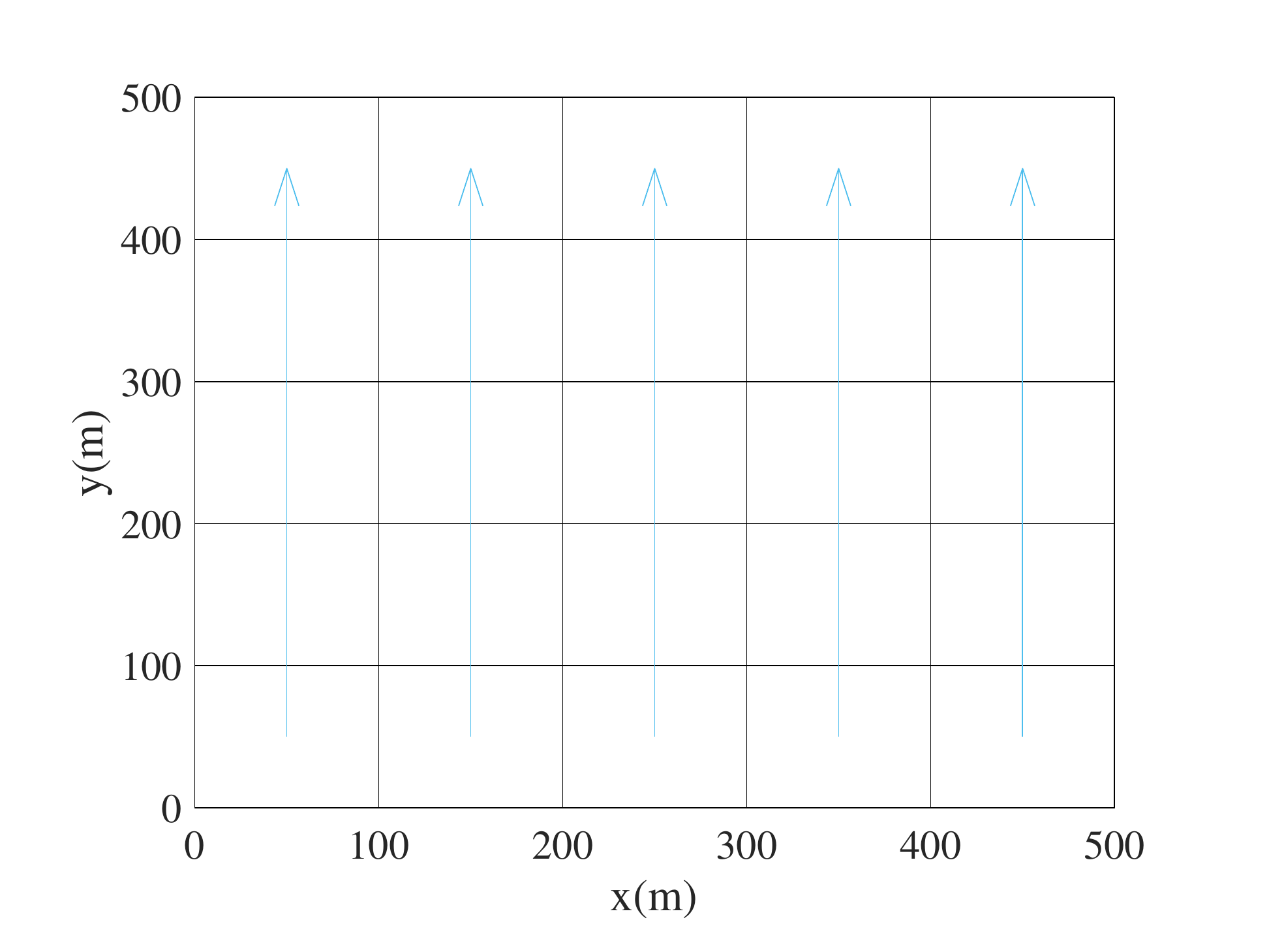}}
	
	\caption{Heuristic searching methods}
	\label{fig:heuristic}
\end{figure}

\begin{figure}
	\centering
	\subfigure[DF algorithm]{\includegraphics[width=0.45\textwidth ]{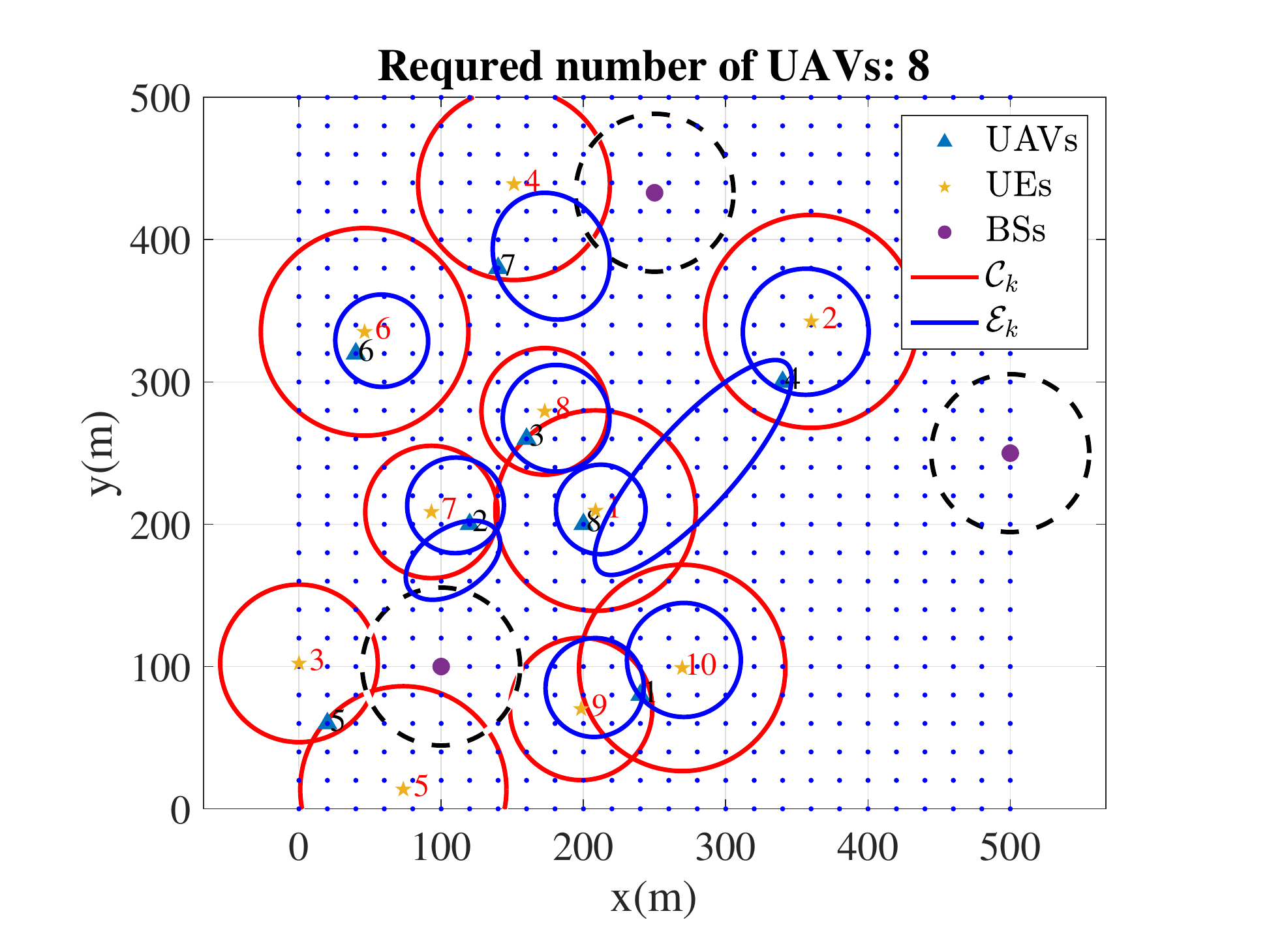}}
	\subfigure[ILP method]{\includegraphics[width=0.45\textwidth ]{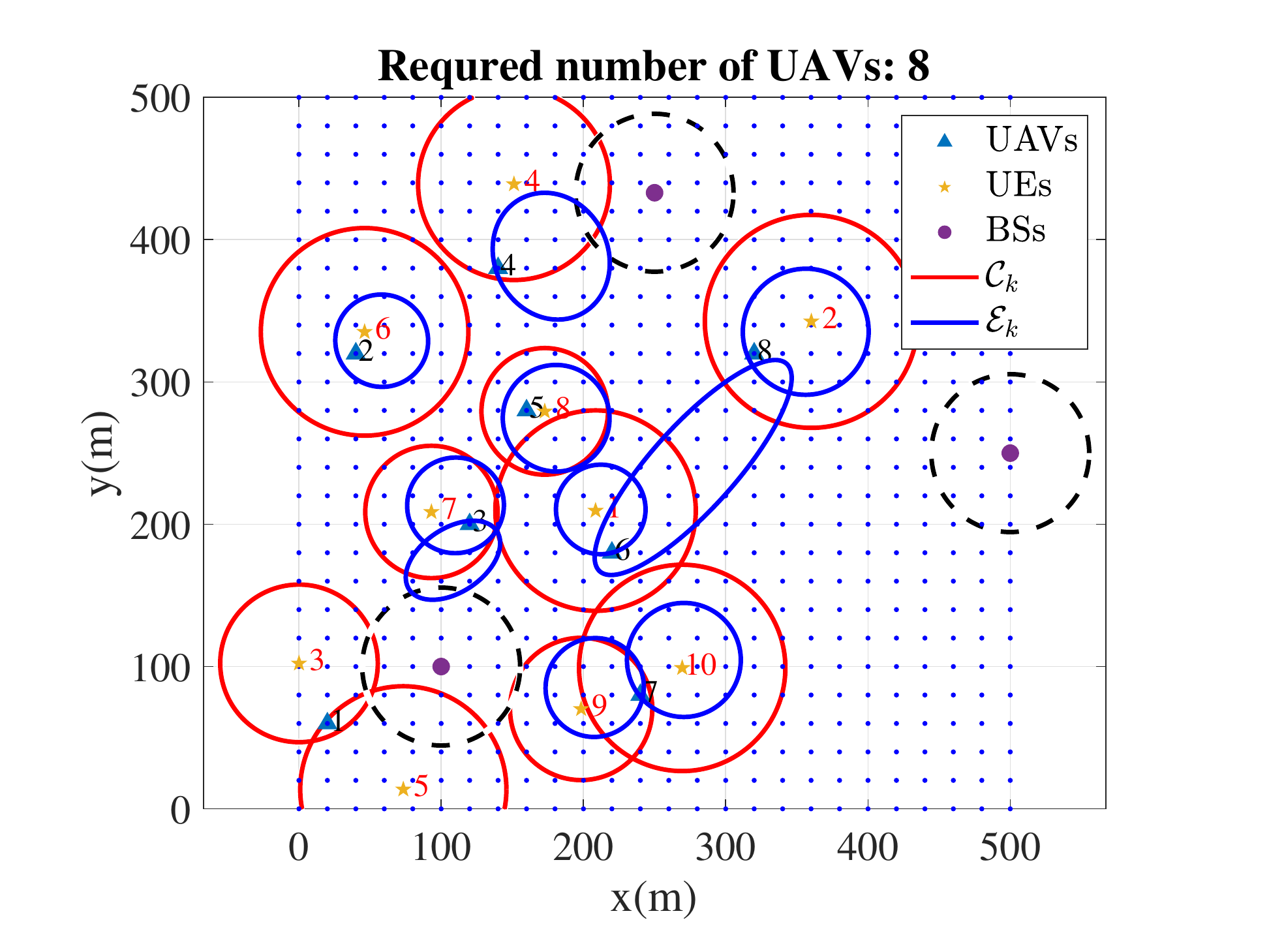}}
\centering
	\subfigure[Communication first]{\includegraphics[width=0.45\textwidth ]{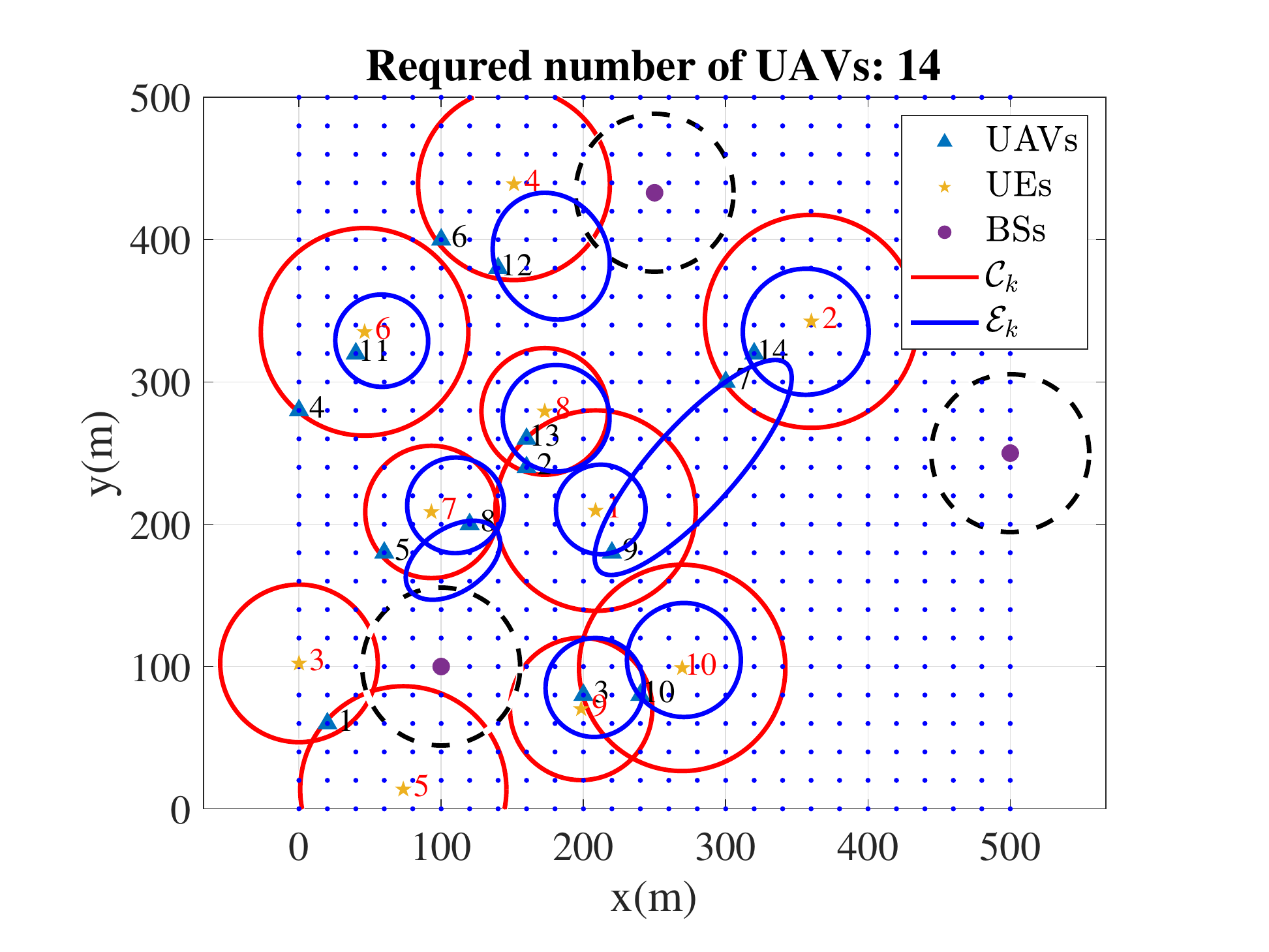}}
	\subfigure[Spiral Searching]{\includegraphics[width=0.45\textwidth ]{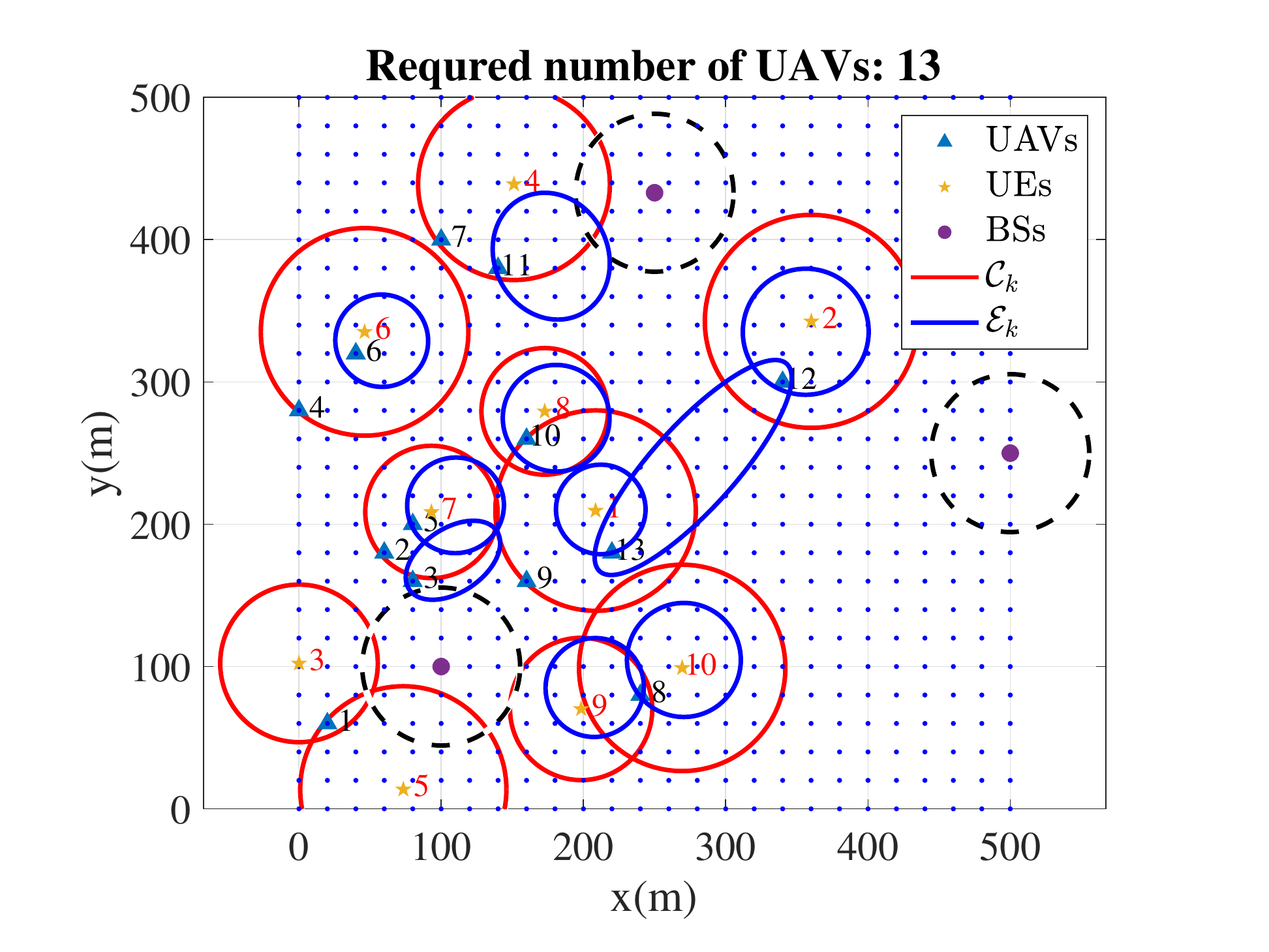}}
	\subfigure[Strip Searching]{\includegraphics[width=0.45\textwidth ]{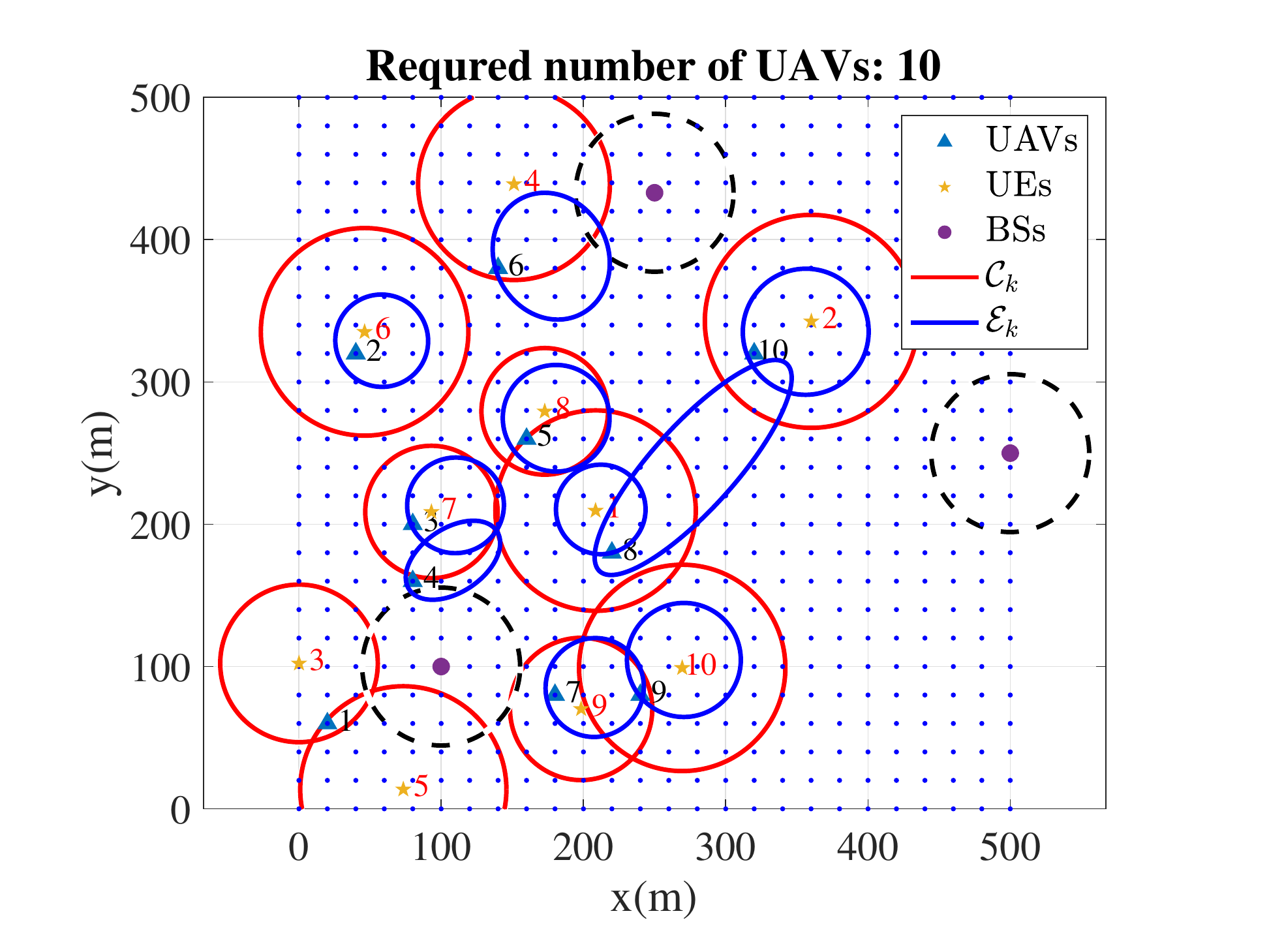}}
	\caption{Placement of UAVs under different methods}
	\label{fig:comp}
\end{figure}

Fig. \ref{fig:comp} shows the required  numbers and the placements of UAVs under different methods for 10 UEs. The localization accuracy requirement $\epsilon_k$ is randomly selected from the range specified by Corollary 1 to ensure feasibility. The communication requirement is randomly chosen from $3\leq R_k^{th} \leq 5$ Mbps. Each UE is associated with one circular region (red circle) to meet the communication performance and an elliptical region (blue ellipse) to meet the localization performance. The dashed black circles are the communication coverage of the three BSs. As long as there is a UAV hovering in its corresponding blue area, its localization accuracy requirement is satisfied. Likewise, as long as there is a UAV or ground base station in the red area, the communication requirement can be satisfied.  As shown in Fig. \ref{fig:comp}, the ILP method and the proposed DF method achieve the minimal number of UAVs, which is 8 in this scenario. The spiral searching method and the strip searching method  require 13 UAVs and  10 UAVs, respectively. The communication first method requires 14 UAVs, which is the largest among all methods. Although we use the approximation opt-$D_1$ values to determine the feasible localization regions, we find that the obtained UAV locations always satisfy the original localization requirement (\ref{eq:p1}b) in opt-$D$ values. 

 \begin{figure}[ht]
 	\centering
 	\includegraphics[width=0.45\textwidth ]{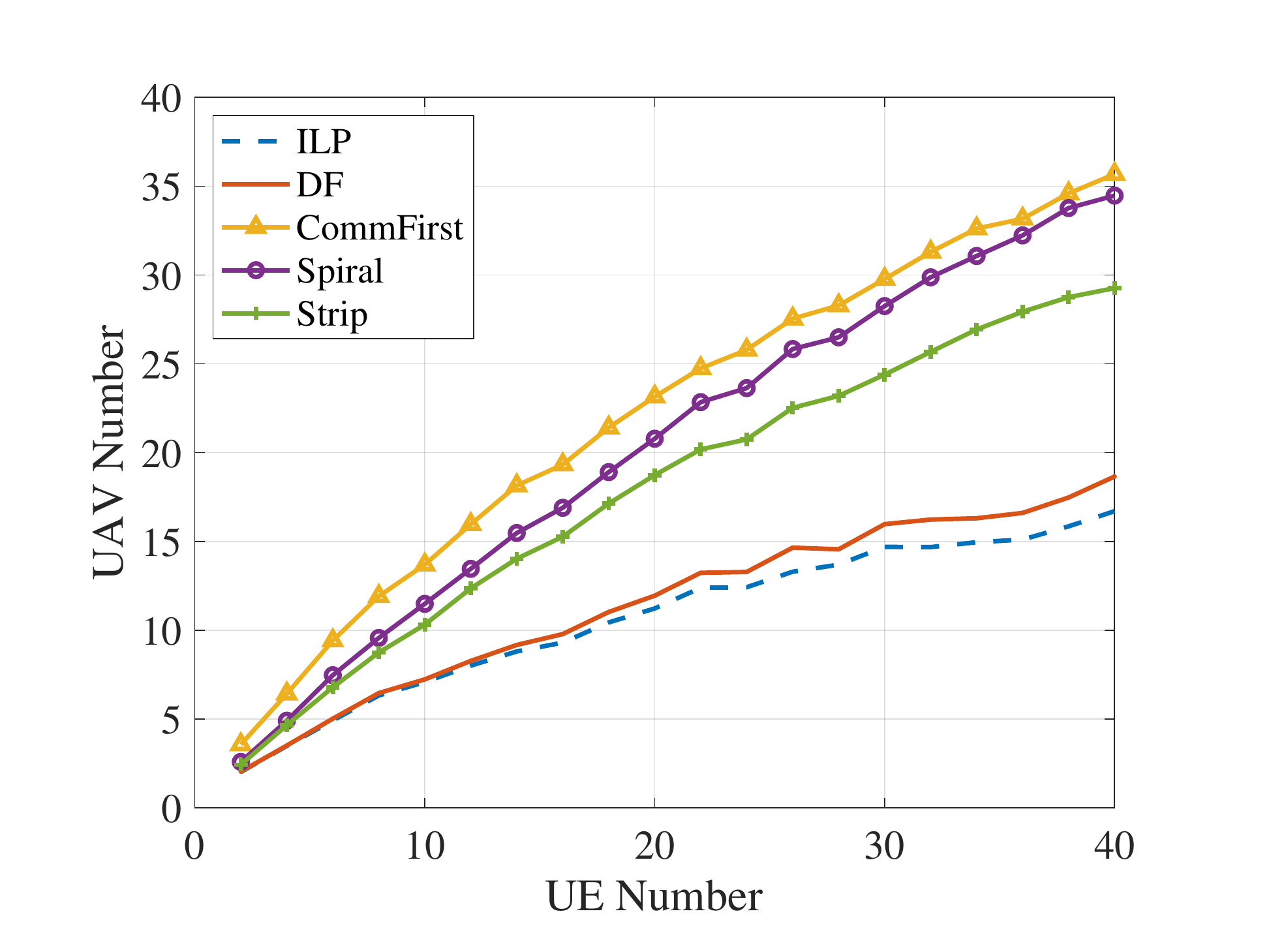}
 	\caption{UAV number v.s. UE number under different methods}
 	\label{fig:uavue}
 \end{figure}

In Fig. \ref{fig:uavue}, we show the average required number of UAVs  under all methods from 2 UEs to 40 UEs. Each sample point is the average of 100 simulation runs. The ILP method attains the optimal solutions and provides the baseline for comparison. As shown in Fig. \ref{fig:uavue}, the DF method is the closest to the ILP method. The communication first method and the spiral searching method require more than twice the optimal UAV numbers. The reason is that these methods are based on the communication coverage method. Therefore, they are not suitable for the UAV deployment for joint communication and localization scenarios. The performance of the strip searching method is slightly better than these two methods, which requires less than twice the optimal UAV numbers. When the number of UEs is 40, the average UAV number of the ILP method is $16.7$. The DF method on average needs $1.95$ more UAVs than the ILP method, while the strip searching method, the spiral searching method and the communication first method require $12.56$, $17.77$ and $18.98$ more UAVs, respectively.

\section{Conclusion} \label{sec:conclusion}

In this paper,  we investigate the optimization problem of deploying the minimal number of UAVs in feasible regions to satisfy the communication and localization requirements of the ground users. We compare the localization performance of the UAV-assisted air-ground cooperative ILAC scheme and the conventional scheme using only terrestrial networks. We observe that the scheme with UAVs can greatly improve the localization accuracy in both horizontal and vertical directions. By adopting $D$-optimality as the localization performance metric, we derive the closed-form expression and the spatial characteristics  of the feasible UAV hovering regions. We solve the deployment problem  by transforming the problem into a geometric minimum hitting set problem, and propose a low-complexity sub-optimal algorithm to solve it. We compare our proposed algorithm with benchmark methods. The number of UAVs required by the proposed algorithm is close to the optimal number, while the other benchmark methods require much more UAVs. 

\begin{appendices}

\section{Proof of Proposition \ref{prop1}} \label{apx:prop1}

With $c_1=\sqrt{\alpha_1^2 +\alpha_2^2+\alpha_3^2}$ and $c_2=\alpha_1 q_{11}+\alpha_2 q_{12}+\alpha_3 q_{13}$, we have $|c_2|\leq c_1$ according to Cauchy-Schwartz inequality:
 	 \begin{gather}
 	 	|\alpha_1 q_{11}+\alpha_2 q_{12}+\alpha_3 q_{13}|\leq \sqrt{\alpha_1^2 +\alpha_2^2+\alpha_3^2} \cdot\sqrt{q_{11}^2+q_{12}^2+q_{13}^2}\notag \\
 	 	\leq \sqrt{\alpha_1^2 +\alpha_2^2+\alpha_3^2}||\bm q_1||=\sqrt{\alpha_1^2 +\alpha_2^2+\alpha_3^2}.
 	 \end{gather}
The feasible condition for the intersection of sphere and plane is $\rho\leq 1$. Therefore, we have
\begin{gather}
    \rho=\tfrac{|\tilde \epsilon_1 |}{c_1} \leq 1 \Rightarrow |\tilde \epsilon_1 | \leq c_1 \Rightarrow|\sqrt{D_1\epsilon}+c_2| \leq c_1 
     \Rightarrow  -c_1-c_2\leq \sqrt{D_1 \epsilon}\leq c_1-c_2. \label{eq:prop1eq2}
\end{gather}
Since $|c_2|\leq c_1$, we have $-c_1-c_2\leq 0$ and \eqref{eq:prop1eq2} becomes $\sqrt{D_1 \epsilon}\leq c_1-c_2$. 

\section{Proof of Corollary 1}\label{apx:col1}
The general conic equation $Ax^2 + Bxy + Cy^2 + Dx + Ey + F = 0$ represents an ellipse if and only if
  \begin{align}\label{eq:condition}
      B^2-4AC<0.
  \end{align}
  For \eqref{eq:ellipse1}, condition \eqref{eq:condition} is equivalent to
  \begin{align}
       (\tfrac{2\alpha_1 \alpha_2 }{\tilde \epsilon_1^2})^2-4(\tfrac{\alpha_1^2 }{\tilde \epsilon_1^2}-1 )(\tfrac{\alpha_2^2 }{\tilde \epsilon_1^2}-1 ) <0\Longleftrightarrow \tfrac{\alpha_1^2+\alpha_2^2 }{\tilde \epsilon_1^2}-1<0.
    \end{align}
    Since $\tilde \epsilon_1 =\sqrt{D_1\epsilon}+c_2$ and we define $c_3=\sqrt{\alpha_1^2+\alpha_2^2 }\geq 0$, the condition is equivalent to
    \begin{align}\label{eq:cond1}
       |\sqrt{D_1\epsilon}+c_2| > c_3     \Leftrightarrow \sqrt{D_1\epsilon} >c_3-c_2, \text{~or~} \sqrt{D_1\epsilon} <-c_3-c_2,
  \end{align}
Similarly, because $\tilde \epsilon_2 = -\sqrt{D_1\epsilon}+c_2$, the condition \eqref{eq:condition}  for \eqref{eq:ellipse2} is equivalent to 
  \begin{align}\label{eq:cond2}
      &(\tfrac{2\alpha_1 \alpha_2 }{\tilde \epsilon_2^2})^2-4(\tfrac{\alpha_1^2 }{\tilde \epsilon_2^2}-1 )(\tfrac{\alpha_2^2 }{\tilde \epsilon_2^2}-1 ) <0 \Leftrightarrow |-\sqrt{D_1\epsilon}+c_2| >c_3, \notag \\
      &\Leftrightarrow \sqrt{D_1\epsilon} >c_2+c_3, \text{~or~} \sqrt{D_1\epsilon} <c_2-c_3.
  \end{align}
  When $c_2\geq 0$, $\sqrt{D_1\epsilon} <-c_3-c_2\leq 0$ in \eqref{eq:cond1} can be discarded. If $\epsilon$ satisfies $|c_3-c_2|< \sqrt{D_1\epsilon} \leq c_3+c_2$, \eqref{eq:cond1} holds while \eqref{eq:cond2} does not hold. When $c_2<0$, $\sqrt{D_1\epsilon} <c_2-c_3<0$ in \eqref{eq:cond2} can be ignored. If  $|c_3+c_2|< \sqrt{D_1\epsilon} \leq c_3-c_2$, \eqref{eq:cond2} holds while \eqref{eq:cond1} does not hold.

\section{Proof of Proposition \ref{prop5}} \label{apx:prop5}
 Since $\hat{\mathbb{P}}( LoS,\theta_{ku})\leq 1 $, we have 
 \begin{align}
 &||\bm b_u-\bm m_k ||^\alpha	\leq \tfrac{P_k \hat{\mathbb{P}}( LoS,\theta_{ku})}{W_{ku}N_0\gamma_0 (2^{R_k^{th}/W_{ku}}-1 )}\leq \tfrac{P_k}{W_{ku}N_0\gamma_0 (2^{R_k^{th}/W_{ku}}-1 )} \notag \\
 \Rightarrow &||\bm b_u-\bm m_k ||\leq \big(\tfrac{P_k}{W_{ku}N_0\gamma_0 (2^{R_k^{th}/W_{ku} }-1 )}  \big)^{\frac{1}{\alpha}}=R_0 \notag \\
 \Rightarrow & \tfrac{|h_f-h_k|}{||\bm b_u-\bm m_k ||} \geq \tfrac{|h_f-h_k|}{R_0}  \Rightarrow \sin \theta_{ku} \geq \sin \hat \theta_{ku}.
\end{align}

In addition, both $\sin \theta_{ku} $ and $\hat{\mathbb{P}}( LoS,\theta_{ku})$ are monotonically increasing when $\theta_{ku}\in [0^\circ,90^\circ] $, thus $\sin \theta_{ku} \geq \sin \hat \theta_{ku}$ implies $ \hat{\mathbb{P}}( LoS,\theta_{ku})\geq \hat{\mathbb{P}}( LoS,\hat  \theta_{ku})$. Hence, the proof is complete.
\end{appendices}

\bibliographystyle{IEEEtran}
\bibliography{IEEEabrv,references}

\end{document}